\setlist{leftmargin=*}
\newtheorem{lemma}{{Lemma}}
\newtheorem{proposition}{{Proposition}}
\newtheorem{remark}{{Remark}}
\newtheorem{corollary}{{Corollary}}
\newenvironment{proof}{{\noindent\it Proof:}}{\hfill $\square$\par}
\definecolor{purple}{RGB}{128,0,128}
\newcommand{\rev}{\textcolor{black}}
\let\hat\widehat
\begin{document}

\title{\fontsize{21pt}{25pt} Delay-Angle Information Spoofing for Channel State Information-Free Location-Privacy Enhancement}

\author{Jianxiu~Li,~\IEEEmembership{Member,~IEEE,} and~Urbashi~Mitra,~\IEEEmembership{Fellow,~IEEE}\vspace{-24pt}
\thanks{J. Li and U. Mitra are with the Department of Electrical and Computer Engineering, University of Southern California, Los Angeles, CA 90089, USA (e-mails: {jianxiul, ubli}@usc.edu).}
\thanks{{This paper was presented in part at 2024 IEEE International Conference on Communications (ICC) \cite{DAIS} and 2024 IEEE International Symposium on Information Theory (ISIT) \cite{DAIS_ISIT}}. This work has been funded in part by one or more of the following: NSF CCF-1817200, ARO W911NF1910269, ARO W911NF2410094, DOE DE-SC0021417, Swedish Research Council 2018-04359, NSF CCF-2008927, NSF CCF-2200221, NSF RINGS-2148313, NSF CCF-2311653, NSF A22-2666-S003, ONR 503400-78050, ONR N00014-15-1-2550, and ONR N00014-22-1-2363.}
}

\markboth{}
{LI AND MITRA: DELAY-ANGLE INFORMATION SPOOFING FOR CHANNEL STATE INFORMATION-FREE LOCATION-PRIVACY ENHANCEMENT} 

\maketitle

\begin{abstract}
In this paper, a delay-angle information spoofing (DAIS) strategy is proposed to enhance the location privacy at the physical layer. More precisely, the location-relevant delays and angles are artificially shifted without the aid of channel state information (CSI) at the transmitter, such that the location perceived by the eavesdropper is incorrect and distinct from the true one. By leveraging the intrinsic structure of the wireless channel, a precoder is designed to achieve DAIS while the legitimate localizer can remove the obfuscation via securely {receiving} a modest amount of information, \textit{i.e.,} the delay-angle shifts. A lower bound on eavesdropper's localization error is derived, {revealing} that location privacy is enhanced  not only due to  estimation error, but also by the \textit{geometric mismatch} introduced by DAIS. Furthermore, the lower bound is explicitly expressed as a function of the delay-angle shifts, characterizing performance trends and providing the appropriate design of these shift parameters. The statistical hardness of maliciously inferring the delay-angle shifts by a single-antenna eavesdropper as well as the challenges for a multi-antenna eavesdropper are investigated to assess the robustness of the proposed DAIS strategy. Numerical results show that the proposed DAIS strategy results in more than $15$ dB performance degradation for the eavesdropper as compared with that for the legitimate localizer at high signal-to-noise ratios, and provides more effective location-privacy enhancement than the prior art.
\end{abstract}
\vspace{-3pt} 
\begin{IEEEkeywords} 
Localization, location-privacy, channel state information, spoofing, precoding.
\end{IEEEkeywords}

\IEEEpeerreviewmaketitle

\vspace{-10pt}
\section{Introduction}
\label{sec:intro}
\IEEEPARstart{P}{roviding} accurate location information is a crucial service for fifth generation of wireless technologies and beyond \cite{5Glocalizationtutorial}, and supports {many} applications, such as smart factories \cite{smarfactory5G}, autonomous driving \cite{V2X5G} and assisted living \cite{assistedliving}. {For example, thanks to the large bandwidth and limited multipath,  millimeter wave (mmWave) signals \cite{Rappaport} have been widely employed to achieve centimeter-level localization accuracy with only a single legitimate localizer in a multi-antenna system \cite{Shahmansoori,Zhou,FascistaMISOML,LOCMAN},} where the location-relevant parameters, \textit{e.g.}, delay-angle information associated with each wireless path, can be precisely estimated to infer the location of the user equipment (UE). However, the main focus of the existing localization algorithm designs, such as \cite{Shahmansoori,Zhou,FascistaMISOML,LOCMAN}, is how to leverage the features of the wireless signals to improve estimation accuracy, without considering location-privacy preservation. Location privacy of the UE can be threatened due to the broadcast nature of electromagnetic wave propagation. {If the eavesdropper has the access to these wireless signals, the UE’s location will be exposed, which can be used to potentially infer more private information, \textit{e.g.,} personal preferences \cite{Ayyalasomayajula,li2023fpi}.}

To reduce privacy leakage at the physical layer, the actual channel state, or the statistics of the channel, is usually exploited for privacy enhancement designs \cite{Goel,Tomasin2,dacosta2023securecomm,Goztepesurveyprivacy,Checa, Ayyalasomayajula,li2023fpi,Ardagnaobfuscation}. One approach for preserving location privacy is to protect the location-relevant information from being easily snooped from the wireless signals. For example, prior strategies \cite{Tomasin2,Checa} either inject artificial noise that is in the \textit{null space} of the legitimate channel to decrease the received signal-to-noise ratio (SNR) for the eavesdropper \cite{Tomasin2} or hide key delay-angle information via transmit beamforming \cite{Checa}. Another method is to perturb location-relevant information as location inference usually relies on correct knowledge of such information {\cite{Ayyalasomayajula,DOLOS}}. In \cite{Ayyalasomayajula}, a beamformer is designed to virtually increase the delay of the line-of-sight (LOS) path such that one of the non-line-of-sight (NLOS) paths appears to have the smallest delay and thus is incorrectly treated as the LOS for inferring UE’s location. An improved beamforming design is recently provided in \cite{DOLOS}, with reduced amount of channel state information (CSI) needed to achieve the same goal as in \cite{Ayyalasomayajula}. However, all of these designs \cite{Tomasin2,Checa,Ayyalasomayajula,DOLOS} rely on accurate CSI to be present at the UE; acquiring such knowledge strongly increases the overhead for the resource-limited UE.

To prevent the location-relevant information from being accurately estimated without the aid of CSI, {\textit{fake paths} are designed in \cite{li2023fpi}} and virtually injected to the channel via a precoder design. Motivated by the
studies of stability of the super-resolution problem \cite{Ankur,LiTIT,candes2014towards,LIACHA,ANDE,LOCMAN}, {the Fisher information matrix (FIM) associated with the eavesdropper's inference is analyzed to show that accurately estimating the location-relevant parameters becomes statistically harder in the presence of fake path injection (FPI) \cite{li2023fpi}}, especially when the injected fake paths are highly correlated with the true paths. The CSI-free FPI strategy is further investigated for a single-input-multiple-output system and a scenario with multiple single-antenna eavesdroppers in \cite{FPISIMO} and \cite{SISOMultipleUD}, respectively. A FPI-based deceptive jamming design is also examined in \cite{yildirim2024deceptive}. Although the effectiveness of FPI has been validated in \cite{li2023fpi,FPISIMO,SISOMultipleUD, yildirim2024deceptive}, such a design does not directly spoof the location-relevant information itself, so location snooping is still possible at high SNRs when the bandwidth and the number of antennas are quite large. Additionally, if the structure of the designed precoder is unfortunately leaked to the eavesdropper, the associated precoding matrix can be inferred with a sufficiently large number of measurements, undermining the efficacy of the FPI strategy \cite{li2023fpi}. 

In this paper, motivated by the obfuscation techniques in \cite{Ardagnaobfuscation,Ayyalasomayajula}, we propose a  \textit{delay-angle information spoofing} (DAIS) strategy to enhance location privacy. Distinct from \cite{Ayyalasomayajula}, CSI is not needed herein. By artificially shifting the delays and angles with the proposed DAIS strategy, the UE’s location as perceived by the eavesdropper is \textit{virtually} moved to an {\bf incorrect} location that can be far from the true one. In contrast, such \textit{geometric mismatch} is not introduced with the FPI strategy in \cite{li2023fpi} though a challenging estimation framework is created therein. As a result of the differences, in this work, we still analyze the associated Fisher information, but develop a \textit{misspecified} Cram\'{e}r-Rao bound (MCRB) \cite{Fortunatimismatchsurvey} to characterize eavesdropper’s estimation error with DAIS, versus a true Cram\'{e}r-Rao bound (CRB) in \cite{li2023fpi}. Our theoretical analysis characterizes the amount of obfuscation possible via the proposed DAIS strategy. \rev{We note that the proposed strategy does not rely on the channel asymmetry and can significantly degrade the eavesdropper's localization accuracy regardless of whether the eavesdropper is more powerful or less powerful.} The main contributions of this paper are: 
\begin{enumerate}
\item  The DAIS framework is proposed, where all the location-relevant delays and angles are shifted.
\item To {mislead} the eavesdropper via the shifted delays and angles, a {new} CSI-free precoding strategy is proposed, {where the delay-angle \textit{shifts} \rev{(which are like secret keys~\cite{Schaefer,ZhangQ})} are securely shared with the legitimate localizer to maintain legitimate localizer’s localization accuracy}.
\item A MCRB is derived in the presence of DAIS, with a closed-form expression for the pseudo-true (incorrect) locations, theoretically validating the efficacy of the proposed scheme.
\item Under an assumption of the orthogonality among the paths, the derived MCRB is explicitly expressed as a function of the delay-angle shifts in a closed-form, suggesting the sensitivity to these key parameters as well as design rules for their selection.
\item The possible leakage of the structure of the designed precoder to a single-antenna eavesdropper and the challenges for a multi-antenna eavesdropper are investigated, which shows the increased robustness of the proposed scheme in contrast to \cite{li2023fpi}.
\item Numerical comparisons for different methods of location privacy enhancement are provided, showing that more than $15$ dB performance degradation is achievable at high SNRs for the eavesdropper, due to the proposed DAIS method. In terms of the eavesdropper's localization accuracy, the root-mean-square error (RMSE) with  DAIS can be more than $150$ m, which outperforms the prior art \cite{li2023fpi,Ayyalasomayajula}.
\end{enumerate}

The present work completes our previous works \cite{DAIS,DAIS_ISIT} with  explicit proofs and new comparisons with prior art \cite{li2023fpi,Ayyalasomayajula}. Distinct from \cite{DAIS,DAIS_ISIT} where the focus is on the performance degradation for a single-antenna eavesdropper, a multi-antenna eavesdropper is also analyzed herein. We note that, though \cite{Ayyalasomayajula} proposes to add an extra delay to the LOS path to degrade the localization accuracy for a multi-antenna eavesdropper, {angle information is not spoofed in \cite{Ayyalasomayajula} and the eavesdropper still can leverage angle information to infer UE’s location once the LOS path can be identified.} In contrast, the challenges resulting from the use of DAIS in this environment are investigated in this paper to show the robustness of this new design. 

The rest of this paper is organized as follows. Section~\ref{sec:signalmodel} introduces the system model used for localization. Section~\ref{sec:csifreedesign} presents the proposed DAIS strategy for location-privacy enhancement, where the delays and angles are shifted via a CSI-free precoder design. In Section~\ref{sec:analysis}, lower bounds on the localization error are derived and analyzed for both the legitimate localizer and the eavesdropper, proving the enhancement of location privacy with DAIS. In addition, the impact of the delay-angle shifts in the DAIS design is also investigated in this section, by further simplifying the lower bounds derived for eavesdropper’s localization accuracy. Section \ref{sec:DAISworsecase} characterizes the robustness of the proposed DAIS strategy, where the statistical hardness of avoiding DAIS for a single-antenna eavesdropper as well as the challenges for a multi-antenna eavesdropper are specifically studied. Numerical results are provided in Section~\ref{sec:sim} to validate the theoretical analyses and highlight the performance degradation caused by DAIS for the eavesdropper relative to other schemes. Conclusions are drawn in Section~\ref{sec:con}. Appendices \ref{sec:proofpseudotruepos}, \ref{sec:proofdegradation}, \ref{sec:proofEveLBsimplifed}, \ref{sec:prooflowerboundpos}, and \ref{sec:proofstrucleakge} provide the proofs for the key technical results.

We use the following notation. Scalars are denoted by lower-case letters $x$ and column vectors by bold letters $\bm{x}$. The $i$-th element of $\bm{x}$ is denoted by $\bm{x}[i]$. Matrices are denoted by bold capital letters $\bm X$ and $\boldsymbol{X}[i, j]$ is the ($i$, $j$)-th element of $\bm{X}$. $\boldsymbol{I}_{l}$ is reserved for a $l\times l$ identity matrix. The operators $\|\bm  x\|_{2}$, $|x|$, $\mathfrak{R}\{x\}$, $\mathfrak{I}\{x\}$, $\lfloor{x}\rfloor$, and $\operatorname{diag}(\mathcal{A})$ stands for the $\ell_2$ norm of $\bm x$, the magnitude of $x$, the real part of $x$, the imaginary part of $x$, the largest integer that is less than $x$, and a diagonal matrix whose diagonal elements are given by $\mathcal{A}$, respectively. The operator $(x)_{(t_1,t_2]}$ with $t_1<t_2$ is defined as $(x)_{(t_1,t_2]}\triangleq x-\left\lfloor\frac{x-t_1}{t_2-t_1}\right\rfloor(t_2-t_1)$
and $\mathbb{E}\{\cdot\}$ is used for the expectation of a random variable. The operators $(\cdot)^\mathrm{T}$, $(\cdot)^{\mathrm{H}}$ and $(\cdot)^{-1}$, are defined as the transpose, the conjugate transpose, and the inverse of a vector or matrix, respectively.

\vspace{-10pt} 
\section{System Model}\label{sec:signalmodel}

As shown in Figure \ref{fig:sys}, we consider a system model similar to \cite{li2023fpi}, where a legitimate localizer (Bob), at a location $\bm q^{\star}=[q^{\star}_x, q^{\star}_y]^{\mathrm{T}} \in \mathbb{R}^{2\times 1}$, serves a UE (Alice) at an unknown position $\bm p^{\star} = [p^{\star}_x, p^{\star}_y]^{\mathrm{T}} \in \mathbb{R}^{2\times 1}$. To provide the location-based services, after Alice transmits pilot signals through a public channel, Bob can infer Alice's location from the received signal. Unfortunately, there is an eavesdropper (Eve), at a position $\bm z^{\star}=[z^{\star}_x, z^{\star}_y]^{\mathrm{T}}\in\mathbb{R}^{2\times 1}$, who can eavesdrop on the public channel to also estimate Alice's location. We assume both Bob and Eve know the pilot signals as well as their own locations so Eve’s malicious inference jeopardizes Alice's location-privacy if no location-privacy preservation mechanisms are adopted. 

Herein, mmWave multiple-input-single-output (MISO) orthogonal frequency-division multiplexing (OFDM) signaling is used for the transmissions. Accordingly, Alice has $N_t$ antennas while both Bob and Eve are equipped with a single antenna\footnote{\rev{The goal of this work is {\bf not to design} a localization algorithm, but rather a  \emph{location-privacy enhancement strategy}. The proposed design does not assume that Bob must use a single antenna. Using certain shared information that is inaccessible to Eve, Bob still can accurately infer Alice's location, even when he has multiple antennas, which will be elaborated in Sections \ref{sec:csifreedesign} and \ref{sec:analysis}.} The analysis for a multi-antenna \rev{Eve} will be provided in \rev{Sections} \ref{sec:DAISworsecase} and \rev{\ref{sec:sim}}.}. Denoting by $N$ and $G$ the number of sub-carriers and the number of the transmitted signals, respectively, we express the $g$-th symbol transmitted over the $n$-th sub-carrier as $x^{(g,n)}$ and the corresponding beamforming vector as $\bm f^{(g,n)}\in\mathbb{C}^{N_t\times1}$. Then, the pilot signal can be written as $\boldsymbol{s}^{(g,n)}\triangleq \boldsymbol{ f}^{(g,n)}x^{(g,n)}\in\mathbb{C}^{N_t\times 1}
$. Assume that each pilot signal transmitted over the $n$-th sub-carrier are independent and identically distributed and we have $\mathbb{E}\{\boldsymbol{s}^{(g,n)}(\boldsymbol{s}^{(g,n)})^{\mathrm{H}}\}= \frac{1}{N_t}\bm I_{N_t}$. The received signal is given by \vspace{-4pt} 
\begin{equation}
{y}^{(g,n)}=\boldsymbol{h}^{(n)} \boldsymbol{s}^{(g,n)}+{w}^{(g,n)},\label{eq:rsignal}\vspace{-4pt} 
\end{equation}
for $n = 0, 1, \cdots, N-1$ and $g = 1, 2, \cdots, G$, where $\bm h^{(n)}\in \mathbb{C}^{1\times N_t}$ is the $n$-th sub-carrier public channel vector while ${w}^{(g,n)}\sim \mathcal{CN}({0},\sigma^2)$ represents independent, zero-mean, complex Gaussian noise with variance $\sigma^2$.

\begin{figure}[t]
\centering
\includegraphics[scale=0.42]{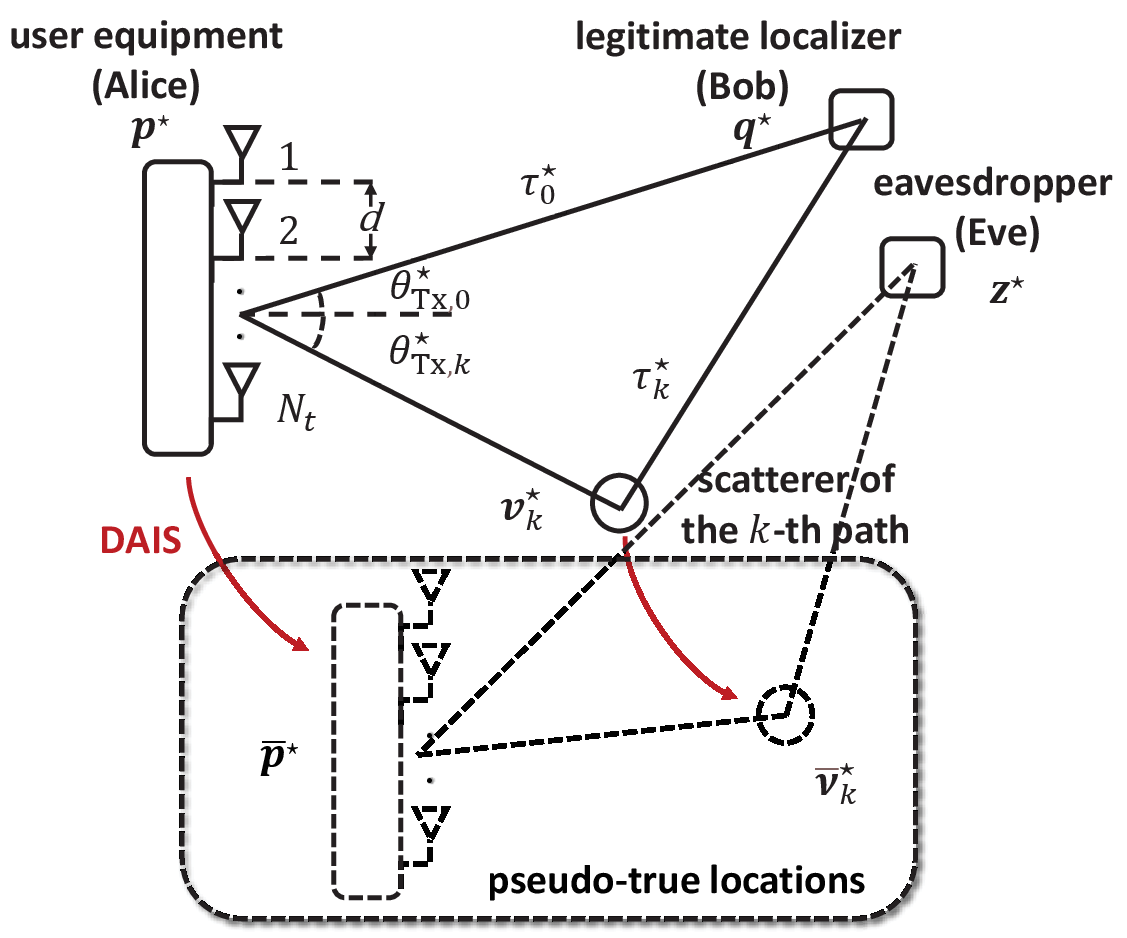}\vspace{-6pt}
\caption{System model.}\vspace{-15pt}
\label{fig:sys}
\end{figure}

We assume that there exist $K$ NLOS paths in the channel, in addition to an available LOS path. The $k$-th NLOS path is produced by a scatterer at an \textit{unknown} position $\bm v^{\star}_k = [v^{\star}_{k,x},v^{\star}_{k,y}]^{\mathrm{T}}\in\mathbb{R}^{2\times 1}$, with $k = 1,2,\cdots,K$. Denote by $c$, $\varphi_c$, $B$, and $T_s\triangleq\frac{1}{B}$, the speed of light, carrier frequency, bandwidth, and sampling period, respectively. A narrowband channel is considered in this paper, \textit{i.e.}, $B\ll \varphi_c$, and the public channel vector $\bm h^{(n)}$ can be modeled as \vspace{-5pt} \cite{li2023fpi,FascistaMISOMLCRB,FascistaMISOML}
\begin{equation}
\boldsymbol{h}^{(n)}\triangleq\sum_{k=0}^{K}\gamma^{\star}_k e^{\frac{-j 2\pi n\tau^{\star}_k}{N T_{s}}}\boldsymbol{ \alpha}\left(\theta^{\star}_{\mathrm{Tx},k}\right)^{\mathrm{H}},
\label{eq:channel_subcarrier}\vspace{-5pt} 
\end{equation}
where $k=0$ corresponds to the LOS path while $\gamma^{\star}_k$, $\tau^{\star}_k$, and $\theta^{\star}_{\mathrm{Tx},k}$ represent the complex channel coefficient, the time-of-arrival (TOA), and the angle-of-departure (AOD) of the $k$-th path, respectively. The steering vector $\boldsymbol{ \alpha}\left(\theta^{\star}_{\mathrm{Tx},k}\right)\in\mathbb{C}^{N_t\times1}$ is defined as $\bm\alpha(\theta^{\star}_{\mathrm{Tx},k})\triangleq\left[1, e^{-j\frac{2\pi d\sin(\theta^{\star}_{\mathrm{Tx},k})}{\lambda_c}}, \cdots, e^{-j\frac{2\pi(N_t-1)d\sin(\theta^{\star}_{\mathrm{Tx},k})}{\lambda_c}}\right]^{\mathrm{T}}$, for $k=0,1,\cdots,K$, where $\lambda_c\triangleq\frac{c}{\varphi_c}$ is the wavelength and $d$ is the distance between antennas, designed as $d=\frac{\lambda_c}{2}$. Define $\bm v^{\star}_0\triangleq \bm q^{\star}$ (or $\bm v^{\star}_0\triangleq \bm z^{\star}$) for Bob (or Eve). From the geometry, the TOA and AOD of the $k$-th path are given by\vspace{-5pt} 
\begin{equation}\label{eq:geometry}
\begin{aligned}
\tau^{\star}_{k} &=\frac{\left\|\boldsymbol{v}^{\star}_0-\boldsymbol{v}^{\star}_{k}\right\|_{2} +\left\|\boldsymbol{p}^{\star}-\boldsymbol{v}^{\star}_{k}\right\|_{2}} {c}\\
\theta^{\star}_{\mathrm{Tx}, k} &=\arctan \left(\frac{v^{\star}_{k,y}-p^{\star}_{y}} {v^{\star}_{k,x}-p^{\star}_{x}}\right),\vspace{-3pt} 
\end{aligned}
\end{equation} 
where we assume $0<\tau^{\star}_k\leq T_{\mathrm{cp}}\leq NT_s$ and $-\frac{\pi}{2}<\theta^{\star}_{\mathrm{Tx},k}\leq\frac{\pi}{2}$ with $T_{\mathrm{cp}}$ being the duration of the cyclic prefix (CP). In addition, it is assumed that the orientation angle of Alice's antenna array and the clock bias are known to both Bob and Eve as prior information; without loss of generality, these parameters are set to zero in this paper\footnote{\rev{These assumptions simplify the signal model without loss of generality, and enable better illustration of the location privacy enhancement;  our proposed strategy does not rely on these assumptions.}}.

{Given the noise level characterized by $\sigma^2$}, once the received signals are collected, Alice's location can be inferred with the pilot signals. \rev{On} the one hand, to enhance location privacy, we aim to increase Eve’s localization error. On the other hand, Alice needs to transmit the \textit{shared information} over a secure channel to ensure Bob's localization accuracy. Note that, CSI is assumed to be unavailable to Alice.

\vspace{-6pt} 
\section{Delay-Angle Information Spoofing for Location-Privacy Enhancement}\label{sec:csifreedesign} 
For the model-based localization designs, such as \cite{Shahmansoori,LOCMAN,FascistaMISOMLCRB,FascistaMISOML}, channel parameters are typically inferred in the first stage and then locations are estimated from the location-relevant channel parameters, \textit{i.e.,} TOAs and AODs, according to the geometry given in Equation \eqref{eq:geometry}.  More precisely, without any location-privacy preservation mechanisms, the locations can be derived as, \vspace{-4pt} 
\begin{equation}\label{eq:locationestimate}
    \begin{aligned} 
        {\boldsymbol{p}}^{\star} &= \boldsymbol{z}^\star - c\tau^\star_{0}[\cos({\theta}^\star_{\mathrm{Tx},0}),\sin(\theta^\star_{\mathrm{Tx},0})]^{\mathrm{T}},\\ 
        {\boldsymbol{v}}^\star_{k} &=\frac{1}{2}{b}^{\star}_{k}[\cos({\theta}^\star_{\mathrm{Tx},k}),\sin({\theta}^\star_{\mathrm{Tx},k})]^{\mathrm{T}}+{\boldsymbol{p}}^{\star},\vspace{-4pt} 
    \end{aligned} 
\end{equation}
where \vspace{-4pt} 
\begin{equation}
    \begin{aligned}          {b}^{\star}_{k}=\frac{\left(c\tau^\star_k\right)^2-\left(z^\star_x-{p}^{\star}_{x}\right)^2-\left(z^\star_y-{p}^{\star}_{y}\right)^2}{c\tau^\star_k-\left(z^\star_x-{p}^{\star}_{x}\right)\cos({\theta}^\star_{\mathrm{Tx},k})-\left(z^\star_y-{p}^{\star}_{y}\right)\sin({\theta}^\star_{\mathrm{Tx},k})}, \vspace{-4pt} 
    \end{aligned}
\end{equation}
with $k=1,2,\cdots,K$. 

As observed in Equation \eqref{eq:locationestimate}, high localization accuracy not only relies on super-resolution channel estimation, but also requires the knowledge of the geometric model. Hence, distinct from making Eve's channel estimation statistically harder with a FPI design \cite{li2023fpi}, in this section, we first propose a new CSI-free framework for location-privacy enhancement, where the delays and angles of the paths in Eve's channel are artificially shifted to {mislead} Eve to perceive a mismatched geometric model. Then, a precoding design is provided as one approach to efficiently achieve DAIS. 

\vspace{-6pt} 
\subsection{Delay-Angle Information Spoofing}\label{subsec:dias}
Let $\Delta_\tau$ and $\Delta_\theta$ represent two constants used for the DAIS design. To prevent Alice's location from being accurately inferred by Eve from the estimate of the location-relevant channel parameters, the TOAs and AODs of the paths in Eve’s channel are shifted according to $\Delta_\tau$ and $\Delta_\theta$, respectively, as  \vspace{-4pt} 
\begin{equation}\label{eq:mismatchgeometry}
\begin{aligned}
\bar{\tau}_{k}& = \left(\tau^{\star}_{k} + \Delta_\tau\right)_{(0,NT_s]}\\
\bar{\theta}_{\mathrm{Tx}, k}&= \arcsin\left({\left(\sin(\theta^{\star}_{\mathrm{Tx}, k})+\sin(\Delta_\theta)\right)_{\left(-1,1\right]}}\right).\vspace{-4pt} 
\end{aligned}
\end{equation} 
By shifting the TOAs and AODs during the transmission of the pilot signals, the proposed DAIS scheme misleads Eve into treating an incorrect physical location as the true one if she exploits the incorrect geometric model in Equation \eqref{eq:geometry} for localization. The degraded localization accuracy will be analyzed in Section \ref{sec:degradationdais}. We note that, since shifting delay and angles does not rely on the channel parameters, location-privacy preservation with the proposed DAIS strategy is CSI-free, though the optimal design of such shifts requires the CSI, which will be shown in Section \ref{subsec:parameterdesign}. {However, even without CSI, we can propose effective shift designs.} 

\vspace{-9pt}
\subsection{Precoder Design}
Alice employs the mmWave MISO OFDM signaling; to achieve DAIS, we design a precoding matrix $\bm\Phi^{(n)}\in\mathbb{C}^{N_t\times N_t}$ for Alice as\footnote{{This new precoder has some similarities to the precoding matrix in \cite{li2023fpi}, defined as ${\bm \Phi}_{\text{FPI}}^{(n)} \triangleq\bm I_{N_t} + e^{-\frac{2\pi n \bar\delta_\tau}{NT_s}} \operatorname{diag}\left(\bm \alpha\left(\bar\delta_{\theta_{\text{TX}}}\right)^{\mathrm{H}}\right)$, where $\bar\delta_\tau$ and $\bar\delta_{\theta_{\text{TX}}}$ are two design parameters for FPI. However, given the subtle differences, the design of the parameters, \textit{i.e.,} $\Delta_\tau$, $\Delta_{\theta}$ versus $\bar\delta_\tau$, $\bar\delta_{\theta_{\text{TX}}}$, and the associated analyses in these two works are quite distinct.}}\vspace{-4pt}
\begin{equation}\label{eq:daisbeamformer}
    {\bm \Phi}^{(n)} \triangleq  e^{-\frac{{j}2\pi n \Delta_\tau}{NT_s}} \operatorname{diag}\left(\bm \alpha\left(\Delta_\theta\right)^{\mathrm{H}}\right),\vspace{-5pt}
\end{equation}
with $n=0,1,\cdots,N-1$. Then, the signal received through the public channel can be re-expressed as \vspace{-5pt}
\begin{equation}
\begin{aligned}
\bar{y}^{(g,n)}
&=\boldsymbol{h}^{(n)}{{\bm \Phi}^{(n)}}{\boldsymbol{ s}}^{(g,n)}+{w}^{(g,n)},\\
&=\boldsymbol{h}^{(n)}e^{-\frac{{j}2\pi n \Delta_\tau}{NT_s}} \operatorname{diag}\left(\bm \alpha\left(\Delta_\theta\right)^{\mathrm{H}}\right){\bm s}^{(g,n)}+{w}^{(g,n)}\\
&=\sum_{k=0}^{K}\gamma^{\star}_k e^{-\frac{j 2\pi n\bar{\tau}_k}{N T_{s}}}\boldsymbol{ \alpha}\left(\bar{\theta}_{\mathrm{Tx},k}\right)^{\mathrm{H}}{\bm s}^{(g,n)}+{w}^{(g,n)}\\
&=\bar{\boldsymbol{h}}^{(n)}{\bm s}^{(g,n)}+{w}^{(g,n)},\vspace{-5pt}
\end{aligned}\label{eq:daisrsignal}
\end{equation}
where \vspace{-4pt}
\begin{equation}
\bar{\boldsymbol{h}}^{(n)}\triangleq\sum_{k=0}^{K}\gamma^{\star}_k e^{\frac{-j 2\pi n\bar{\tau}_k}{N T_{s}}}\boldsymbol{ \alpha}\left(\bar{\theta}_{\mathrm{Tx},k}\right)^{\mathrm{H}} \vspace{-5pt}
\end{equation} 
represents a \textit{virtual channel} for the $n$-th sub-carrier, constructed based on the original channel ${\boldsymbol{h}}^{(n)}$ with shifted TOAs and AODs.

\vspace{-9pt}
\subsection{Eve’s Localization}
With the knowledge of $\bar{y}^{(g,n)}$ and ${\bm s}^{(g,n)}$, Eve will incorrectly believe that $\bar{\boldsymbol{h}}^{(n)}$ is the channel to be estimated\footnote{Using the proposed precoding strategy in Equation \eqref{eq:daisbeamformer} for DAIS, the physical public channel ${\boldsymbol{h}}^{(n)}$ is not fundamentally changed. Therefore, we do not need to adjust the duration of the CP to avoid inter-symbol interference and maintain the orthogonality of the sub-carriers. The signal model in Equation \eqref{eq:rsignal} still holds.}; after channel estimation, Eve cannot accurately infer Alice's true location using the mismatched geometric model in Equation \eqref{eq:geometry}. It will be shown in Section \ref{subsec:eveerror} that shifting the TOAs and AODs virtually moves Alice and the $k$-th scatterer to other positions, denoted as $\bar{\bm p}^\star\triangleq[\bar{p}^\star_x,\bar{p}^\star_y]^\mathrm{T}\in\mathbb{R}^{2\times1}$ and $\bar{\bm v}^\star_k\triangleq[\bar{v}^\star_{k,x},\bar{v}^\star_{k,y}]^\mathrm{T}\in\mathbb{R}^{2\times1}$, respectively, with $k=1,2,\cdots,K$. 

\vspace{-9pt}
\subsection{Bob’s Localization}
In contrast, since we assume that Bob receives the shared information $\bm\Delta\triangleq[\Delta_\tau,\Delta_\theta]^\mathrm{T}\in\mathbb{R}^{2\times1}$ through a secure channel that is inaccessible by Eve\footnote{Error in the shared information can reduce Bob's estimation accuracy as well, but the study of such error is beyond the scope of this paper.}, Bob can construct effective pilot signals $\bar{\bm s}^{(g,n)}\triangleq {\bm\Phi^{(n)}}{\bm s}^{(g,n)}\in\mathbb{C}^{N_t\times1}$ for estimation. By leveraging the original signal model in Equation \eqref{eq:rsignal} with the knowledge of $\bar{\bm s}^{(g,n)}$ for localization, Bob is not measurably affected by the proposed DAIS scheme and he can maintain his localization accuracy. We note that not all pilot signals yield the same estimation accuracy, thus Bob's performance could either be slightly degraded or improved in DAIS.

\vspace{-6pt} 
\section{Localization Accuracy with Delay-Angle Information Spoofing}\label{sec:analysis}
To show the enhanced location privacy with the proposed DAIS strategy, Fisher information based lower bounds for Bob's and Eve’s localization error are derived and analyzed in this section. The analysis is not restricted to specific estimators. To be more concrete, the \textit{effective Fisher information} \cite{Tichavskyefim} is introduced to characterize channel estimation accuracy. Then, the CRB \cite{Scharf} for Bob's localization error is provided accordingly, which is used for a theoretical comparison between strategies. Due to the geometric mismatch introduced by DAIS, the MCRB \cite{Fortunatimismatchsurvey} is leveraged for the analysis of Eve's localization error, which is further investigated to show the efficacy of the proposed strategy.

\vspace{-9pt} 
\subsection{Effective Fisher Information for Channel Estimation}
Define $\bar{\bm\xi}\triangleq\left[\bar{\bm\tau}^\mathrm{T},\bar{\bm\theta}_{\mathrm{Tx}}^{\mathrm{T}},\mathfrak{R}\{\bm\gamma^{\star}\},\mathfrak{I}\{\bm\gamma^{\star}\}\right]^{\mathrm{T}}\in\mathbb{R}^{4(K+1)\times 1}$ as a vector of the unknown channel parameters, where $\bar{\bm\tau}\triangleq\left[\bar{\tau}_0, \bar{\tau}_1\cdots, \bar{\tau}_K\right]^{\mathrm{T}}\in\mathbb{R}^{(K+1)\times 1}$, $\bar{\bm\theta}_{\mathrm{Tx}}\triangleq\left[\bar{\theta}_{\mathrm{Tx},0}, \bar{\theta}_{\mathrm{Tx},1}\cdots, \bar{\theta}_{\mathrm{Tx},K}\right]^{\mathrm{T}}\in\mathbb{R}^{(K+1)\times 1}$, and $\bm\gamma^{\star} \triangleq [\gamma^{\star}_0,\gamma^{\star}_1,\cdots,\gamma^{\star}_K]^{\mathrm{T}}\in\mathbb{R}^{(K+1)\times 1}$. The \rev{FIM} for the estimation of $\bar{\bm \xi}$, denoted as $\bm J_{\bar{\bm \xi}}\in\mathbb{R}^{4(K+1)\times 4(K+1)}$, is given by \cite{Scharf} \vspace{-4pt} 
\begin{equation}\label{eq:FIMce}
    \bm J_{\bar{\bm \xi}} = \frac{2}{\sigma^2}\sum_{n=0}^{N-1}\sum_{g=1}^{G}\mathfrak{R}\left\{\left(\frac{\partial\overline{\varpi}^{(g,n)}}{\partial \bar{\bm\xi}}\right)^{*}\frac{\partial \overline{ \varpi}^{(g,n)}}{\partial \bar{\bm\xi}}\right\}, \vspace{-4pt} 
\end{equation}
where $\overline{\varpi}^{(g,n)} \triangleq \bar{\boldsymbol{h}}^{(n)}\bm s^{(g,n)}$.

Let $\bar{\bm\eta}\triangleq[\bar{\bm\tau}^\mathrm{T},\bar{\bm\theta}_{\mathrm{Tx}}^{\mathrm{T}}]^{\mathrm{T}}\in\mathbb{R}^{2(K+1)\times1}
$ represent the location-relevant channel parameters in the presence of DAIS. We partition the FIM $\bm J_{\bar{\bm \xi}}$ as $\bm J_{\bar{\bm \xi}}= \begin{bmatrix}\bm J_{\bar{\bm \xi}}^{(1)} &\bm J_{\bar{\bm \xi}}^{(2)}\\\bm J_{\bar{\bm \xi}}^{(3)}&\bm J_{\bar{\bm \xi}}^{(4)}\end{bmatrix}$, with $\bm J_{\bar{\bm \xi}}^{(m)}\in\mathbb{R}^{2(K+1)\times2(K+1)}$, for $m=1,2,3,4$. To analyze the localization accuracy, the channel coefficients are considered as nuisance parameters and accordingly the effective FIM for the estimation of the location-relevant channel parameters $\bar{\bm\eta}$ can be derived as \cite{Tichavskyefim}\vspace{-4pt} 
\begin{equation} \label{eq:efim}
    \bm J_{\bar{\bm \eta}} = \bm J_{\bar{\bm \xi}}^{(1)}-\bm J_{\bar{\bm \xi}}^{(2)}\left(\bm J_{\bar{\bm \xi}}^{(4)}\right)^{-1}\bm J_{\bar{\bm \xi}}^{(3)}\in\mathbb{R}^{2(K+1)\times 2(K+1)}. \vspace{-4pt} 
\end{equation}
Using the proposed DAIS method for location-privacy enhancement, the localization accuracy for Bob and Eve will be studied in the following subsections, respectively.

\vspace{-9pt} 
\subsection{Bob's Localization Error} 
Since Bob has the access to the shared information $\bm\Delta$, there is no error introduced into Bob's knowledge of the geometric model; he still can leverage the geometric model in Equation \eqref{eq:geometry} to infer Alice's true location after estimating the true channel based on the received signals $\bar{y}^{(g,n)}$ and effective pilot signals $\bar{\boldsymbol{s}}^{(g,n)}$. Similar to Equation \eqref{eq:efim}, the effective FIM for the estimation of true location-relevant channel parameters $\bm\eta^\star\triangleq[(\bm\tau^\star)^{\mathrm{T}},(\bm\theta^\star_{\mathrm{Tx}})^{\mathrm{T}}]\in\mathbb{R}^{2(K+1)\times 1}$ with ${\bm\tau}^\star\triangleq\left[{\tau}^\star_0, {\tau}^\star_1\cdots, {\tau}^\star_K\right]^{\mathrm{T}}\in\mathbb{R}^{(K+1)\times 1}$ and ${\bm\theta}^\star_{\mathrm{Tx}}\triangleq\left[{\theta}^\star_{\mathrm{Tx},0}, {\theta}^\star_{\mathrm{Tx},1}\cdots, {\theta}^\star_{\mathrm{Tx},K}\right]^{\mathrm{T}}\in\mathbb{R}^{(K+1)\times 1}$ can be derived, which is denoted as $\bm J_{{\bm \eta}^\star}\in\mathbb{R}^{2(K+1)\times 2(K+1)}$. 

Then, the FIM for Bob's localization, denoted as $\bm J_{\bm \phi^{\star}}\in\mathbb{R}^{2(K+1)\times 2(K+1)}$, is given by \vspace{-3pt}
\begin{equation}\label{eq:FIMloc}
    \bm J_{\bm \phi^{\star}} = \bm\Pi_{\bm \phi^{\star}}^\mathrm{T}\bm J_{{\bm \eta}^\star}\bm\Pi_{\bm \phi^{\star}}, \vspace{-3pt}
\end{equation}
where $\bm\phi^{\star}\triangleq [(\boldsymbol{p}^{\star})^{\mathrm{T}},(\boldsymbol{v}^{\star}_1)^{\mathrm{T}},(\boldsymbol{v}^{\star}_2)^{\mathrm{T}},\cdots,(\boldsymbol{v}^{\star}_K)^{\mathrm{T}}]^{\mathrm{T}}\in\mathbb{R}^{2(K+1)\times1}$ is a vector of the true locations of Alice and scatterers while $\bm\Pi_{\bm \phi^{\star}}\triangleq\frac{\partial {\bm\eta}^\star}{\partial\bm\phi^{\star}}\in\mathbb{R}^{2(K+1)\times2(K+1)}$ can be derived according to Equation \eqref{eq:geometry}. Let $\hat{\bm\phi}_{\text{Bob}}$ be Bob's estimate of $\bm\phi^{\star}$ with an \textit{unbiased estimator}. The mean squared error (MSE) with such an estimator can be bounded as follows \cite{Scharf}\vspace{-3pt}
\begin{equation}\label{eq:BobCRB}
    \mathbb{E}\left\{\left(\hat{\bm\phi}_{\text{Bob}}-{\bm\phi}^{\star}\right)\left(\hat{\bm\phi}_{\text{Bob}}-{\bm\phi}^{\star}\right)^{\mathrm{T}}\right\}\succeq\bm\Xi_{\bm \phi^{\star}}\triangleq{\bm J_{\bm \phi^{\star}}^{-1}},\vspace{-3pt}
\end{equation}
which is the well-known CRB. We note that our prior work on atomic norm minimization based localization yields estimators with bias that decrease as the SNR increases \cite{LOCMAN}.

\vspace{-6pt} 
\subsection{Eve's Localization Error}\label{subsec:eveerror}
To analyze Eve's localization accuracy, employing an \textit{efficient \rev{(CRB achieving)} estimator} for channel estimation, we denote by $\hat{\bm\eta}_{\text{Eve}}$ Eve’s estimate of $\bar{\bm\eta}$, and assume the true distribution of $\hat{\bm\eta}_{\text{Eve}}$ as in \cite{ZhengmismatchRIS}, \textit{i.e.,}\vspace{-3pt} 
\begin{equation}\label{eq:truemodel}
\hat{\bm\eta}_{\text{Eve}}=u(\bm\phi^\star)+\bm\epsilon,\vspace{-3pt} 
\end{equation}
where $\bm\epsilon$ is a zero-mean Gaussian random vector with covariance matrix $\bm\Sigma_{\bar{\bm\eta}}\triangleq\bm J_{\bar{\bm\eta}}^{-1}$. Herein, we have $\bar{\bm\eta}=u(\bm\phi^\star)$ with $u(\cdot)$ being a function mapping the location information $\bm\phi^\star$ to the location-relevant channel parameters $\bar{\bm\eta}$ according to the true geometric model defined in Equation \eqref{eq:mismatchgeometry}. However, since the shared information is unavailable to Eve, for a given vector of potential locations of Alice and scatterer $\bar{\bm\phi}\triangleq [(\bar{\boldsymbol{p}})^{\mathrm{T}},(\bar{\boldsymbol{v}}_1)^{\mathrm{T}},(\bar{\boldsymbol{v}}_2)^{\mathrm{T}},\cdots,(\bar{\boldsymbol{v}}_K)^{\mathrm{T}}]^{\mathrm{T}}\in\mathbb{R}^{2(K+1)\times1}$, Eve will incorrectly believe that the estimates of channel parameters $\hat{\bm\eta}_{\text{Eve}}$ are modelled as \vspace{-3pt} 
\begin{equation}\label{eq:mismatchmodel}
\hat{\bm\eta}_{\text{Eve}}=o(\bar{\bm\phi})+\bm\epsilon,\vspace{-3pt} 
\end{equation}
where $o(\cdot)$ is function similar to $u(\cdot)$, but is defined according to the mismatched geometric model assumed in Equation \eqref{eq:geometry}. The true and mismatched distributions of $\hat{\bm\eta}_{\text{Eve}}$ are denoted as $g_{\text{T}}(\hat{\bm\eta}_{\text{Eve}}|\bm\phi^\star)$ and $g_{\text{M}}(\hat{\bm\eta}_{\text{Eve}}|\bar{\bm\phi})$, respectively. 

Then, we can find a vector of the pseudo-true (incorrect) locations of Alice and scatterers $\bar{\bm\phi}^\star\triangleq [(\bar{\boldsymbol{p}}^\star)^{\mathrm{T}},(\bar{\boldsymbol{v}}^\star_1)^{\mathrm{T}},(\bar{\boldsymbol{v}}^\star_2)^{\mathrm{T}},\cdots,(\bar{\boldsymbol{v}}^\star_K)^{\mathrm{T}}]^{\mathrm{T}}\in\mathbb{R}^{2(K+1)\times1}$, when Eve exploits Equation \eqref{eq:mismatchmodel} for localization though the true distribution of $\hat{\bm\eta}_{\text{Eve}}$ is defined according to Equation \eqref{eq:truemodel}, such that \cite{Fortunatimismatchsurvey} \vspace{-3pt} 
\begin{equation}\label{eq:KLD}
    \bar{\bm\phi}^\star = \arg\min_{\bar{\bm\phi}} \operatorname{KL}\left(g_{\text{T}}(\hat{\bm\eta}_{\text{Eve}}|\bm\phi^\star)\|g_{\text{M}}(\hat{\bm\eta}_{\text{Eve}}|\bar{\bm\phi})\right),\vspace{-3pt} 
\end{equation}
where $\operatorname{KL}(\cdot\|\cdot)$ represents the Kullback–Leibler (KL) divergence for two given distributions. The closed-form expression of $\bar{\bm\phi}^\star$ will be derived later in this subsection. 

Denote by $\hat{\bm \phi}_\text{Eve}$ a \textit{misspecified-unbiased} estimator designed according to the mismatched model in Equation \eqref{eq:mismatchmodel}. With the DAIS for location-privacy enhancement, there is a lower bound for the MSE of Eve’s localization based on the analysis of the MCRB \cite{Fortunatimismatchsurvey} \vspace{-3pt} 
\begin{equation}\label{eq:EveLB}
\begin{aligned}
    &\mathbb{E}\left\{\left(\hat{\bm \phi}_\text{Eve}-{\bm \phi}^{\star}\right)\left(\hat{\bm \phi}_\text{Eve}-{\bm \phi}^{\star}\right)^{\mathrm{T}}\right\}\\
    &\quad\succeq \bm\Psi_{\bar{\bm \phi}^\star}\triangleq{\underbrace{\bm A_{\bar{\bm \phi}^\star}^{-1}\bm B_{\bar{\bm \phi}^\star}\bm A_{\bar{\bm \phi}^\star}^{-1}}_{\bm\Psi_{\bar{\bm \phi}^\star}^{(\romannumeral 1)}}+\underbrace{(\bar{\bm\phi}^\star-\bm\phi^{\star})(\bar{\bm\phi}^\star-\bm\phi^{\star})^\mathrm{T}}_{\bm\Psi_{\bar{\bm \phi}^\star}^{(\romannumeral 2)}}}, \vspace{-3pt} 
\end{aligned}
\end{equation}
where $\bm A_{\bar{\bm \phi}^\star}\in\mathbb{R}^{2(K+1)\times2(K+1)}$ and $\bm B_{\bar{\bm \phi}^\star}\in\mathbb{R}^{2(K+1)\times2(K+1)}$ are two generalized FIMs, defined as \vspace{-3pt} 
\begin{equation}\label{eq:A}
    \bm A_{\bar{\bm \phi}^\star}[r,l]\triangleq\mathbb{E}_{g_{\text{T}}(\hat{\bm\eta}_{\text{Eve}}|\bm\phi^\star)}\left\{\frac{\partial^2}{\partial\bar{\bm\phi}^\star[r]\partial\bar{\bm\phi^\star}[l]}\log g_{\text{M}}(\hat{\bm\eta}_{\text{Eve}}|\bar{\bm\phi}^\star)\right\},\vspace{-3pt} 
\end{equation}
and
\begin{equation}\label{eq:B}
\begin{aligned}
&\bm B_{\bar{\bm \phi}^\star}[r,l]\\
&\triangleq\mathbb{E}_{g_{\text{T}}(\hat{\bm\eta}_{\text{Eve}}|\bm\phi^\star)}\left\{\frac{\partial\log g_{\text{M}}(\hat{\bm\eta}_{\text{Eve}}|\bar{\bm\phi}^\star) }{\partial\bar{\bm\phi}^\star[r]}\frac{\partial\log g_{\text{M}}(\hat{\bm\eta}_{\text{Eve}}|\bar{\bm\phi}^\star) }{\partial\bar{\bm\phi}^\star[l]}\right\},\vspace{-5pt} 
\end{aligned}
\end{equation}

The next step is to derive the closed-form expression of $\bar{\bm\phi}^{\star}$, which is given by the following lemma.

\begin{lemma}\label{lemma:pseudotruepos}
Given the true and mismatched distributions of the estimated parameters $\hat{\bm\eta}_{\text{Eve}}$ in Equations \eqref{eq:truemodel} and \eqref{eq:mismatchmodel}, the pseudo-true locations of Alice and scatterers are given by 
\vspace{-3pt}
\begin{subequations}\label{eq:ptrue}
    \begin{align} 
        \bar{\boldsymbol{p}}^{\star} &= \boldsymbol{z} - c\bar\tau_{k_{\min}}[\cos({\bar\theta}_{\mathrm{Tx},k_{\min}}),\sin(\bar\theta_{\mathrm{Tx},k_{\min}})]^{\mathrm{T}},\\
        \bar{\boldsymbol{v}}^\star_{{k_{\min}}} &=\frac{1}{2}\bar{b}^{\star}_{0}[\cos(\bar{\theta}_{\mathrm{Tx},0}), \sin(\bar{\theta}_{\mathrm{Tx},0})]^{\mathrm{T}}+\bar{\boldsymbol{p}}^{\star},\\
        \bar{\boldsymbol{v}}^\star_{k} &=\frac{1}{2}\bar{b}^{\star}_{k}[\cos(\bar{\theta}_{\mathrm{Tx},k}),\sin(\bar{\theta}_{\mathrm{Tx},k})]^{\mathrm{T}}+\bar{\boldsymbol{p}}^{\star}, \quad \text{if} \ k\neq{k_{\min}},
    \end{align} 
\end{subequations}
where $k_{\min}\triangleq\arg\min_k\bar{\tau}_k$
and \vspace{-3pt}
\begin{equation}
    \begin{aligned}    
        \bar{b}^{\star}_{k}=\frac{\left(c\bar\tau_k\right)^2-\left(z_x-\bar{p}^{\star}_{x}\right)^2-\left(z_y-\bar{p}^{\star}_{y}\right)^2}{c\bar\tau_k-\left(z_x-\bar{p}^{\star}_{x}\right)\cos(\bar{\theta}_{\mathrm{Tx},k})-\left(z_y-\bar{p}^{\star}_{y}\right)\sin(\bar{\theta}_{\mathrm{Tx},k})}, 
    \end{aligned}
\end{equation}
with $k=1,2,\cdots,K$.
\end{lemma}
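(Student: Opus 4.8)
The plan is to exploit the special structure of the two laws in Equations \eqref{eq:truemodel} and \eqref{eq:mismatchmodel}: both are Gaussian, share the same covariance $\bm\Sigma_{\bar{\bm\eta}}=\bm J_{\bar{\bm\eta}}^{-1}$, and differ only in their means, namely $u(\bm\phi^\star)=\bar{\bm\eta}$ for the true law $g_{\text{T}}$ and $o(\bar{\bm\phi})$ for the mismatched law $g_{\text{M}}$. First I would substitute the two Gaussian densities into the KL divergence in Equation \eqref{eq:KLD}; because the covariances coincide, the trace term reduces to a constant and the log-determinant term vanishes, so the objective collapses to the weighted quadratic
\[
\operatorname{KL}\big(g_{\text{T}}\|g_{\text{M}}\big)=\tfrac{1}{2}\big(o(\bar{\bm\phi})-\bar{\bm\eta}\big)^{\mathrm{T}}\bm J_{\bar{\bm\eta}}\big(o(\bar{\bm\phi})-\bar{\bm\eta}\big).
\]

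Next, since $\bm J_{\bar{\bm\eta}}\succ 0$, this quadratic is nonnegative and vanishes if and only if $o(\bar{\bm\phi}^\star)=\bar{\bm\eta}$. Hence the KL-minimization in Equation \eqref{eq:KLD} reduces to a purely geometric inversion: find the locations $\bar{\bm\phi}^\star$ that, under the \emph{mismatched} (original) geometry of Equation \eqref{eq:geometry}, reproduce exactly the shifted delay–angle vector $\bar{\bm\eta}$ defined in Equation \eqref{eq:mismatchgeometry}. I would argue that such a configuration always exists and attains $\operatorname{KL}=0$, so the minimizer is fully characterized by the exact matching condition rather than by a nontrivial trade-off.

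The remaining work is to solve $o(\bar{\bm\phi}^\star)=\bar{\bm\eta}$ in closed form, which amounts to inverting Equation \eqref{eq:geometry}, i.e., applying the location-recovery map of Equation \eqref{eq:locationestimate} to the shifted parameters. The one genuinely delicate point — and the main obstacle — is the identification of the line-of-sight path. Eve's mismatched model designates one path as the direct (LOS) path, and geometric consistency (the triangle inequality forces the LOS delay to be the smallest) requires this to be the minimum-delay path $k_{\min}=\arg\min_k\bar{\tau}_k$. Because the modulo wrapping in Equation \eqref{eq:mismatchgeometry} can reorder the delays, $k_{\min}$ need not equal the true LOS index $0$; I would therefore assign path $k_{\min}$ as Eve's LOS and obtain $\bar{\bm p}^{\star}$ from $(\bar{\tau}_{k_{\min}},\bar{\theta}_{\mathrm{Tx},k_{\min}})$, treating every remaining path as an NLOS path contributing a pseudo-scatterer.

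Finally, I would settle the index bookkeeping that accompanies this relabeling. With path $k_{\min}$ promoted to the LOS slot, the true LOS (path $0$) is demoted to a scatterer path and placed in the vacated slot $k_{\min}$, which yields $\bar{\bm v}^{\star}_{k_{\min}}$ expressed through $(\bar{b}^{\star}_{0},\bar{\theta}_{\mathrm{Tx},0})$, whereas each other path $k\neq k_{\min}$ retains its own slot and gives $\bar{\bm v}^{\star}_{k}$. Verifying that this assignment indeed realizes $o(\bar{\bm\phi}^\star)=\bar{\bm\eta}$ — equivalently, that each coefficient $\bar{b}^{\star}_{k}$ and the associated scatterer position are well defined, which follows from $\bar{\tau}_k\geq\bar{\tau}_{k_{\min}}$ so that the NLOS delays remain admissible — completes the proof and establishes the closed-form expressions in Equation \eqref{eq:ptrue}.
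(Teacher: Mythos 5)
Your proposal is correct and follows essentially the same route as the paper's proof: reduce the KL minimization to the exact matching condition $o(\bar{\bm\phi}^\star)=u(\bm\phi^\star)=\bar{\bm\eta}$ (the paper invokes non-negativity of the KL divergence directly, while you write out the equal-covariance Gaussian KL as a weighted quadratic, which amounts to the same thing), and then invert the mismatched geometry with the minimum-delay path $k_{\min}$ relabeled as the LOS to handle the phase-wrapping reordering. The only difference is cosmetic: the paper organizes the inversion as two explicit cases ($k_{\min}=0$ versus $k_{\min}\neq 0$), whereas you treat them uniformly via the index bookkeeping.
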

\begin{proof}
    See Appendix \ref{sec:proofpseudotruepos}.
\end{proof}
We emphasize that due to the \textit{phase wrapping}, the estimate of the original LOS may not have the smallest delay;  however, when we perform localization, the path with the smallest shifted TOA is assumed to be the LOS path as in \cite{Shahmansoori,LOCMAN,Ayyalasomayajula}. 

\vspace{-6pt} 
\subsection{Degraded Localization Accuracy}\label{sec:degradationdais}
By comparing the lower bounds for Bob's and Eve's localization error, derived in Equations \eqref{eq:BobCRB} and \eqref{eq:EveLB}, respectively, we can show that the proposed DAIS strategy can effectively decrease Eve's localization accuracy as follows.

\begin{proposition}\label{prop:degradation} 
Suppose that Equations \eqref{eq:truemodel} and \eqref{eq:mismatchmodel} characterize the true and mismatched distributions of the estimated channel parameters $\hat{\bm\eta}_{\text{Eve}}$, then, there always exists a constant $\sigma_0$ such that $\operatorname{Tr}(\bm\Psi_{\bar{\bm \phi}^\star})\geq\operatorname{Tr}(\bm\Xi_{\bm \phi^{\star}})$  when $0\leq\sigma\leq\sigma_0$, where $\sigma$ represents the standard deviation of the Gaussian noise ${w}^{(g,n)}$ while $\bm\Xi_{\bm \phi^{\star}}$ and $\bm\Psi_{\bar{\bm \phi}^\star}$ are defined in Equations \eqref{eq:BobCRB} and $\eqref{eq:EveLB}$, respectively.
\end{proposition}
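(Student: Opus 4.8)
The plan is to decompose Eve's bound as $\operatorname{Tr}(\bm\Psi_{\bar{\bm \phi}^\star})=\operatorname{Tr}(\bm\Psi_{\bar{\bm \phi}^\star}^{(\romannumeral 1)})+\operatorname{Tr}(\bm\Psi_{\bar{\bm \phi}^\star}^{(\romannumeral 2)})$ and to track how each piece scales with the noise level $\sigma$, comparing against Bob's CRB $\operatorname{Tr}(\bm\Xi_{\bm \phi^{\star}})$. The geometric-mismatch term is immediate: $\operatorname{Tr}(\bm\Psi_{\bar{\bm \phi}^\star}^{(\romannumeral 2)})=\|\bar{\bm\phi}^\star-\bm\phi^{\star}\|_2^2$ is a nonnegative constant that does not depend on $\sigma$, and by Lemma \ref{lemma:pseudotruepos} it is strictly positive whenever the shifts $\bm\Delta$ induce a genuine displacement of the pseudo-true locations. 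All the work therefore goes into showing that the two estimation-error terms $\operatorname{Tr}(\bm\Psi_{\bar{\bm \phi}^\star}^{(\romannumeral 1)})$ and $\operatorname{Tr}(\bm\Xi_{\bm \phi^{\star}})$ vanish together as $\sigma\to 0$, so that the constant bias dominates.

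First I would pin down the $\sigma$-dependence of the Fisher quantities. From Equation \eqref{eq:FIMce} the channel-estimation FIM carries an overall factor $2/\sigma^2$; since the effective FIM in Equation \eqref{eq:efim} is a Schur complement of blocks that all share this factor, $\bm J_{\bar{\bm\eta}}=\sigma^{-2}\bm N$ for some $\sigma$-independent $\bm N$, and likewise $\bm J_{{\bm\eta}^\star}=\sigma^{-2}\bm N^\star$. Hence $\bm\Sigma_{\bar{\bm\eta}}=\bm J_{\bar{\bm\eta}}^{-1}=\sigma^2\bm N^{-1}$, while the Jacobians $\bm\Pi_{\bm \phi^{\star}}$ and $\bm\Pi_{\bar{\bm\phi}^\star}\triangleq\partial o(\bar{\bm\phi})/\partial\bar{\bm\phi}$ evaluated at $\bar{\bm\phi}^\star$ are purely geometric and independent of $\sigma$. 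Consequently Bob's bound $\bm\Xi_{\bm \phi^{\star}}=(\bm\Pi_{\bm \phi^{\star}}^{\mathrm{T}}\bm J_{{\bm\eta}^\star}\bm\Pi_{\bm \phi^{\star}})^{-1}=\sigma^2(\bm\Pi_{\bm \phi^{\star}}^{\mathrm{T}}\bm N^\star\bm\Pi_{\bm \phi^{\star}})^{-1}$ is exactly linear in $\sigma^2$.

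The key step is to evaluate $\bm A_{\bar{\bm \phi}^\star}$ and $\bm B_{\bar{\bm \phi}^\star}$ at the pseudo-true point. Because $g_{\text{M}}(\cdot|\bar{\bm\phi})$ is Gaussian with mean $o(\bar{\bm\phi})$ and covariance $\bm\Sigma_{\bar{\bm\eta}}$, the score is $\partial_{\bar{\bm\phi}}\log g_{\text{M}}=(\hat{\bm\eta}_{\text{Eve}}-o(\bar{\bm\phi}))^{\mathrm{T}}\bm\Sigma_{\bar{\bm\eta}}^{-1}\,\partial_{\bar{\bm\phi}}o$. The crucial input from Lemma \ref{lemma:pseudotruepos} is that the pseudo-true locations are constructed so that $o(\bar{\bm\phi}^\star)=\bar{\bm\eta}=u(\bm\phi^{\star})$, i.e., the mean residual has zero expectation under $g_{\text{T}}$. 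Substituting this into Equations \eqref{eq:A} and \eqref{eq:B} makes the curvature-of-the-mean contribution to $\bm A_{\bar{\bm \phi}^\star}$ drop out and, using $\mathbb{E}_{g_{\text{T}}}\{\bm\epsilon\bm\epsilon^{\mathrm{T}}\}=\bm\Sigma_{\bar{\bm\eta}}$, reduces both generalized FIMs to $\bm A_{\bar{\bm \phi}^\star}=-\bm\Pi_{\bar{\bm\phi}^\star}^{\mathrm{T}}\bm\Sigma_{\bar{\bm\eta}}^{-1}\bm\Pi_{\bar{\bm\phi}^\star}$ and $\bm B_{\bar{\bm \phi}^\star}=\bm\Pi_{\bar{\bm\phi}^\star}^{\mathrm{T}}\bm\Sigma_{\bar{\bm\eta}}^{-1}\bm\Pi_{\bar{\bm\phi}^\star}=-\bm A_{\bar{\bm \phi}^\star}$. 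Therefore $\bm\Psi_{\bar{\bm \phi}^\star}^{(\romannumeral 1)}=\bm A_{\bar{\bm \phi}^\star}^{-1}\bm B_{\bar{\bm \phi}^\star}\bm A_{\bar{\bm \phi}^\star}^{-1}=-\bm A_{\bar{\bm \phi}^\star}^{-1}=(\bm\Pi_{\bar{\bm\phi}^\star}^{\mathrm{T}}\bm\Sigma_{\bar{\bm\eta}}^{-1}\bm\Pi_{\bar{\bm\phi}^\star})^{-1}$, which has exactly the same CRB structure as $\bm\Xi_{\bm \phi^{\star}}$ and, through $\bm\Sigma_{\bar{\bm\eta}}=\sigma^2\bm N^{-1}$, is again linear in $\sigma^2$.

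With these scalings the conclusion is immediate and in fact explicit. Writing $\operatorname{Tr}(\bm\Psi_{\bar{\bm \phi}^\star}^{(\romannumeral 1)})=c_1\sigma^2$ and $\operatorname{Tr}(\bm\Xi_{\bm \phi^{\star}})=c_2\sigma^2$ with $c_1,c_2>0$ depending only on the geometry, we obtain $\operatorname{Tr}(\bm\Psi_{\bar{\bm \phi}^\star})-\operatorname{Tr}(\bm\Xi_{\bm \phi^{\star}})=(c_1-c_2)\sigma^2+\|\bar{\bm\phi}^\star-\bm\phi^{\star}\|_2^2$, an affine function of $\sigma^2$. If $c_1\geq c_2$ the inequality holds for every $\sigma\geq0$; if $c_1<c_2$ it holds precisely for $0\le\sigma\le\sigma_0$ with $\sigma_0\triangleq\|\bar{\bm\phi}^\star-\bm\phi^{\star}\|_2/\sqrt{c_2-c_1}$, which is strictly positive whenever DAIS moves the pseudo-true locations, and at $\sigma=0$ the inequality is trivial since Bob's CRB vanishes. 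The main obstacle is the third step: correctly collapsing the misspecified FIMs at the pseudo-true point. This hinges on verifying through Lemma \ref{lemma:pseudotruepos} that the mean residual vanishes there, so that $\bm A_{\bar{\bm \phi}^\star}$ and $\bm B_{\bar{\bm \phi}^\star}$ reduce to a single CRB-type matrix with the clean $\sigma^2$ scaling; without this reduction one could only bound the two vanishing rates loosely and the matching of the estimation terms, and hence the existence of a nondegenerate $\sigma_0$, would be unclear.
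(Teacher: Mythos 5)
Your proof is correct, and its logical skeleton is the same as the paper's: decompose $\bm\Psi_{\bar{\bm \phi}^\star}$ into the estimation term $\bm\Psi_{\bar{\bm \phi}^\star}^{(\romannumeral 1)}$ and the $\sigma$-independent mismatch term $\bm\Psi_{\bar{\bm \phi}^\star}^{(\romannumeral 2)}$, and let the latter dominate Bob's vanishing CRB as $\sigma\downarrow 0$. Where you differ is in how much you extract from the estimation terms. The paper's proof is deliberately coarse: it only uses that $\bm\Psi_{\bar{\bm \phi}^\star}^{(\romannumeral 1)}$ and $\bm\Psi_{\bar{\bm \phi}^\star}^{(\romannumeral 2)}$ are positive semidefinite, discards $\operatorname{Tr}(\bm\Psi_{\bar{\bm \phi}^\star}^{(\romannumeral 1)})$ entirely, and invokes $\lim_{\sigma\downarrow 0}\operatorname{Tr}(\bm\Xi_{\bm \phi^{\star}})=0$ to conclude existence of $\sigma_0$ by continuity. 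You instead establish the exact $\sigma^2$-linearity of both $\operatorname{Tr}(\bm\Psi_{\bar{\bm \phi}^\star}^{(\romannumeral 1)})$ and $\operatorname{Tr}(\bm\Xi_{\bm \phi^{\star}})$, which buys an explicit threshold $\sigma_0=\|\bar{\bm\phi}^\star-\bm\phi^{\star}\|_2/\sqrt{c_2-c_1}$ and the observation that the inequality holds for \emph{all} $\sigma$ when $c_1\geq c_2$ — neither of which the paper states. The price is that your middle step (collapsing $\bm A_{\bar{\bm \phi}^\star}$ and $\bm B_{\bar{\bm \phi}^\star}$ via $o(\bar{\bm\phi}^\star)=u(\bm\phi^{\star})$ to get $\bm\Psi_{\bar{\bm \phi}^\star}^{(\romannumeral 1)}=\bm J_{\bar{\bm \phi}^{\star}}^{-1}$) is precisely the paper's proof of Corollary~\ref{coro:EveLBsimplifed}, so you are doing that corollary's work inside this proposition; it is logically sound (it rests only on Lemma~\ref{lemma:pseudotruepos}, which precedes the proposition) but not needed for the weaker existence claim being proved. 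One small point to make explicit in either version: the strict positivity of $\|\bar{\bm\phi}^\star-\bm\phi^{\star}\|_2^2$ requires a nontrivial shift ($\Delta_\tau\neq mNT_s$ or $\sin(\Delta_\theta)\neq 0$, cf.\ Remark~\ref{remark:DAISisCSIfree}); you flag this correctly, and in the degenerate case the non-strict inequality survives only at $\sigma=0$.
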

\begin{proof}
    See Appendix \ref{sec:proofdegradation}.
\end{proof}

According to Equations \eqref{eq:BobCRB} and \eqref{eq:EveLB}, Proposition \ref{prop:degradation} shows that Eve cannot estimate Alice's location more accurately than Bob if the value of $\sigma$ is small enough. {Considering that $\lim_{\sigma\downarrow0}\operatorname{Tr}(\bm\Psi_{\bar{\bm \phi}^\star}^{(\romannumeral 1)})=0$ also holds}, the {geometric mismatch} introduced by the proposed DAIS scheme, {which corresponds to the quantity $\operatorname{Tr}(\bm\Psi_{\bar{\bm \phi}^\star}^{(\romannumeral 2)})$,  is dominant} in the degradation of Eve's localization accuracy at a high SNR. We will theoretically show the impact of the choices of the design parameter $\bm\Delta$ on Eve’s localization accuracy in the next subsection.

\vspace{-10pt}
\subsection{Impact of Parameter Design}
\label{subsec:parameterdesign}
To further investigate the influence of the design parameters $\Delta_\tau$, and $\Delta_\theta$ on Eve’s localization accuracy, we first substitute the pseudo-true locations of Alice and scatterers in Equation \eqref{eq:ptrue} into Equation \eqref{eq:EveLB}, simplifying the MCRB for Eve's localization error as follows.
\begin{corollary}\label{coro:EveLBsimplifed}
Given the true and mismatched distributions of the estimated parameters $\hat{\bm \eta}_{\text{Eve}}$ in Equations \eqref{eq:truemodel} and \eqref{eq:mismatchmodel}, the MSE of Eve’s localization can be bounded as \vspace{-3pt}
\begin{equation}\label{eq:EveLBsimplified}
\begin{aligned}
    &\mathbb{E}\left\{\left(\hat{\bm \phi}_\text{Eve}-{\bm \phi}^{\star}\right)\left(\hat{\bm \phi}_\text{Eve}-{\bm \phi}^{\star}\right)^{\mathrm{T}}\right\}\\
    &\quad\quad\quad\quad\quad\succeq \bm\Psi_{\bar{\bm \phi}^\star}={{\bm J_{\bar{\bm \phi}^{\star}}^{-1}}+{(\bar{\bm\phi}^\star-\bm\phi^{\star})(\bar{\bm\phi}^\star-\bm\phi^{\star})^\mathrm{T}}}.\vspace{-3pt}
\end{aligned}
\end{equation}
\end{corollary}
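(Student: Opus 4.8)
The plan is to show that the first term in Equation \eqref{eq:EveLB}, namely $\bm\Psi_{\bar{\bm \phi}^\star}^{(\romannumeral 1)}=\bm A_{\bar{\bm \phi}^\star}^{-1}\bm B_{\bar{\bm \phi}^\star}\bm A_{\bar{\bm \phi}^\star}^{-1}$, collapses to the ordinary inverse localization FIM $\bm J_{\bar{\bm \phi}^{\star}}^{-1}$ evaluated at the pseudo-true point, after which Equation \eqref{eq:EveLBsimplified} is immediate since the bias term $\bm\Psi_{\bar{\bm \phi}^\star}^{(\romannumeral 2)}=(\bar{\bm\phi}^\star-\bm\phi^{\star})(\bar{\bm\phi}^\star-\bm\phi^{\star})^{\mathrm{T}}$ is carried over unchanged. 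The engine driving the simplification is that both $g_{\text{T}}$ and $g_{\text{M}}$ are Gaussian densities sharing the common covariance $\bm\Sigma_{\bar{\bm\eta}}=\bm J_{\bar{\bm\eta}}^{-1}$ and differing only through their means $u(\bm\phi^\star)$ and $o(\bar{\bm\phi})$.

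First I would record the \emph{mean-matching} identity. For two Gaussians with common covariance, $\operatorname{KL}(g_{\text{T}}\|g_{\text{M}})=\tfrac12\big(u(\bm\phi^\star)-o(\bar{\bm\phi})\big)^{\mathrm{T}}\bm\Sigma_{\bar{\bm\eta}}^{-1}\big(u(\bm\phi^\star)-o(\bar{\bm\phi})\big)$, so the minimizer in Equation \eqref{eq:KLD} is any $\bar{\bm\phi}$ with $o(\bar{\bm\phi})=u(\bm\phi^\star)=\bar{\bm\eta}$. Lemma \ref{lemma:pseudotruepos} exhibits exactly such a point: inverting the mismatched geometry of Equation \eqref{eq:geometry} on the shifted parameters reproduces $\bar{\bm\eta}$ under $o$. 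Hence at $\bar{\bm\phi}^\star$ we obtain the crucial relation $o(\bar{\bm\phi}^\star)=\bar{\bm\eta}$, and the KL divergence attains its global minimum value of zero.

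Next I would differentiate the Gaussian log-likelihood $\log g_{\text{M}}(\hat{\bm\eta}_{\text{Eve}}|\bar{\bm\phi})$, whose score is
\begin{equation}
\frac{\partial \log g_{\text{M}}}{\partial\bar{\bm\phi}[r]}=\left(\frac{\partial o}{\partial\bar{\bm\phi}[r]}\right)^{\mathrm{T}}\bm\Sigma_{\bar{\bm\eta}}^{-1}\big(\hat{\bm\eta}_{\text{Eve}}-o(\bar{\bm\phi})\big),
\end{equation}
with the Hessian containing an additional piece proportional to the residual $\hat{\bm\eta}_{\text{Eve}}-o(\bar{\bm\phi})$. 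Evaluating at $\bar{\bm\phi}^\star$ and taking the expectation under $g_{\text{T}}$, the residual has mean $u(\bm\phi^\star)-o(\bar{\bm\phi}^\star)=\bm 0$ and covariance $\bm\Sigma_{\bar{\bm\eta}}$ by the mean-matching identity. The residual-dependent piece of the Hessian therefore vanishes in expectation, leaving $\bm A_{\bar{\bm \phi}^\star}=-\bm\Pi_{\bar{\bm\phi}^\star}^{\mathrm{T}}\bm J_{\bar{\bm\eta}}\bm\Pi_{\bar{\bm\phi}^\star}\triangleq-\bm J_{\bar{\bm\phi}^\star}$, where $\bm\Pi_{\bar{\bm\phi}^\star}\triangleq\partial o/\partial\bar{\bm\phi}$ evaluated at $\bar{\bm\phi}^\star$ is the Jacobian of the mismatched geometry. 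The outer-product matrix of Equation \eqref{eq:B} reduces, by the same zero-mean, covariance-$\bm\Sigma_{\bar{\bm\eta}}$ property, to $\bm B_{\bar{\bm \phi}^\star}=\bm\Pi_{\bar{\bm\phi}^\star}^{\mathrm{T}}\bm\Sigma_{\bar{\bm\eta}}^{-1}\bm\Pi_{\bar{\bm\phi}^\star}=\bm J_{\bar{\bm\phi}^\star}$, since $\bm\Sigma_{\bar{\bm\eta}}^{-1}\bm\Sigma_{\bar{\bm\eta}}\bm\Sigma_{\bar{\bm\eta}}^{-1}=\bm\Sigma_{\bar{\bm\eta}}^{-1}$.

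Finally I would substitute into $\bm\Psi_{\bar{\bm \phi}^\star}^{(\romannumeral 1)}$ and observe that the two sign flips cancel, $(-\bm J_{\bar{\bm\phi}^\star})^{-1}\bm J_{\bar{\bm\phi}^\star}(-\bm J_{\bar{\bm\phi}^\star})^{-1}=\bm J_{\bar{\bm\phi}^\star}^{-1}$, which is precisely the first term in Equation \eqref{eq:EveLBsimplified}; reattaching $\bm\Psi_{\bar{\bm \phi}^\star}^{(\romannumeral 2)}$ completes the identity. I expect the main obstacle to be the rigorous justification of the mean-matching step: one must verify that the closed form of Lemma \ref{lemma:pseudotruepos}—including the $k_{\min}$ relabeling forced by phase wrapping—makes $o(\bar{\bm\phi}^\star)$ equal to the full shifted-parameter vector $\bar{\bm\eta}$ component-by-component, so that the residual expectation is genuinely zero and both generalized FIMs collapse to the single Jacobian-transformed $\bm J_{\bar{\bm\eta}}$. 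The remaining matrix algebra is routine.
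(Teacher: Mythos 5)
Your proposal is correct and follows essentially the same route as the paper's proof: establish $o(\bar{\bm\phi}^\star)=u(\bm\phi^\star)=\bar{\bm\eta}$ at the KL minimizer from Lemma \ref{lemma:pseudotruepos}, use the resulting zero-mean, covariance-$\bm\Sigma_{\bar{\bm\eta}}$ residual to collapse $\bm A_{\bar{\bm \phi}^\star}$ and $\bm B_{\bar{\bm \phi}^\star}$ to $\mp\bm\Pi_{\bar{\bm\phi}^\star}^{\mathrm{T}}\bm J_{\bar{\bm\eta}}\bm\Pi_{\bar{\bm\phi}^\star}$, and cancel the signs in $\bm A_{\bar{\bm \phi}^\star}^{-1}\bm B_{\bar{\bm \phi}^\star}\bm A_{\bar{\bm \phi}^\star}^{-1}$ to obtain $\bm J_{\bar{\bm\phi}^\star}^{-1}$. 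The only cosmetic difference is that you re-derive the mean-matching property via the Gaussian KL formula, which the paper delegates to Lemma \ref{lemma:pseudotruepos}.
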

\begin{proof}
    See Appendix \ref{sec:proofEveLBsimplifed}.
\end{proof}

\begin{remark}[Interpretation of the simplified MCRB]
    {The lower bound consists of two parts, {\it i.e.,} ${\bm\Psi_{\bar{\bm \phi}^\star}^{(\romannumeral 1)}}={\bm J_{\bar{\bm \phi}^{\star}}^{-1}}$ and ${\bm\Psi_{\bar{\bm \phi}^\star}^{(\romannumeral 2)}}= {(\bar{\bm\phi}^\star-\bm\phi^{\star})(\bar{\bm\phi}^\star-\bm\phi^{\star})^\mathrm{T}}$. Note that $\bm J_{\bar{\bm \phi}^{\star}}$ is the effective FIM for the estimation of ${\bar{\bm \phi}^{\star}}$. This indicates that using the proposed DAIS scheme we virtually move Alice and all the scatterers to certain incorrect locations (pseudo-true locations) ${\bar{\bm \phi}^\star}$ and Eve is misled to estimate such \textit{pseudo-true} locations; for the inference of Alice's true location, Eve has to address both the estimation error of ${\bar{\bm \phi}^\star}$, {\it i.e.}, ${\bm\Psi_{\bar{\bm \phi}^\star}^{(\romannumeral 1)}}$ and the associated geometric mismatch, {\it i.e.}, ${\bm\Psi_{\bar{\bm \phi}^\star}^{(\romannumeral 2)}}$. }
\end{remark}
\begin{remark}[Distinction from the analysis in \cite{li2023fpi}]
The lower bound on Eve's estimation error in \cite[Equation (22)]{li2023fpi} is derived from the analysis of the CRB in the presence of the FPI, since Eve incorrectly perceives more paths in her channel given the \textit{geometric feasibility} of the injected fake paths \cite{li2023fpi}. In contrast, we introduce {geometric mismatch} herein, thus the results in \cite{li2023fpi} cannot be directly used to characterize the performance degradation caused by DAIS as revealed in Equation \eqref{eq:EveLBsimplified}.
\end{remark}

As shown in Equation \eqref{eq:EveLBsimplified}, it is still complicated to theoretically analyze the MCRB to determine how the design parameters $\Delta_\tau$, and $\Delta_\theta$ affect Eve's localization error due to the structure of the effective FIM for the estimation of ${\bar{\bm \phi}^{\star}}$. Therefore, we will provide a closed-form expression for the simplified MCRB in Corollary \ref{coro:EveLBsimplifed} under a mild condition, suggesting the impact of the design parameters $\Delta_\tau$ and $\Delta_\theta$. 

\begin{proposition}\label{prop:lowerboundpos}
Denote by $\hat{\bm p}_\text{Eve}$ the position estimated by Eve. If the paths of mmWave MISO OFDM channels are orthogonal to each other in the following sense\footnote{{Paths are approximately orthogonal to each other for a large number of symbols and transmit antennas due to the low-scattering sparse nature of the mm-Wave channels \cite{li2023fpi,mmWaveorthogonality}. We will numerically validate the analysis for the design of $\bm\Delta$ in Section \ref{subsec:validation} with a practical number of symbols and transmit antennas; we do not impose path orthogonality in the numerical results which validate the DAIS designs.}}, \textit{i.e.}, \vspace{-3pt}
\begin{equation}\label{eq:orthopath}
    \frac{2}{\sigma^2}\sum_{n=0}^{N-1}\sum_{g=1}^{G}\mathfrak{R}\left\{\left(\frac{\partial\bar{ u}^{(g,n)}}{\partial \bar{\xi}_k}\right)^{*}\frac{\partial \bar{ u}^{(g,n)}}{\partial \bar{\xi}_{k^{\prime}}}\right\}=0, \quad k\neq k^{\prime}, \vspace{-3pt}
\end{equation}
where $\bar{\xi}_k\in\{\bar{\tau}_k,\bar{\theta}_{\mathrm{Tx},k},\mathfrak{R}\{\gamma^{\star}_k\},\mathfrak{I}\{\gamma^{\star}_k\}$ and $\bar{\xi}_{k^{\prime}}$ is defined similarly, with $k,k^{\prime}=0,1,\cdots,K$, in the presence of the proposed DAIS strategy, for any real $\psi>0$, there always exists a positive integer $\mathcal{G}$ such that when $G\geq\mathcal{G}$ the MSE for Eve's localization can be bounded as,\vspace{-3pt} 
\begin{equation}\label{eq:EveLBsimplify}
\begin{aligned}
    &\mathbb{E}\left\{\left\|\hat{\bm p}_\text{Eve}-{\bm p}^{\star}\right\|^2_2\right\}\\
    &\geq C_1+\frac{C_2\bar{\tau}_{k_{\min}}^2}{\cos^2(\bar{\theta}_{\mathrm{Tx},{k_{\min}}})}+\left\|\bar{\bm p}^\star-{\bm p}^{\star}\right\|^2_2,\\
    &=C_1+\frac{C_2({\tau}_{k_{\min}}+\Delta_{\tau})^2}{\cos^2
    \left(\arcsin\left({\left(\sin(\theta^{\star}_{\mathrm{Tx}, k})+\sin(\Delta_\theta)\right)_{\left(-1,1\right]}}\right)\right)}\\
    &+ \Big(z_x -p^\star_x - c(\tau_{k_{\min}}+\Delta_\tau)\\
    &\quad\quad\quad\quad \times\cos\left(\arcsin\left({\left(\sin(\theta^{\star}_{\mathrm{Tx}, k})+\sin(\Delta_\theta)\right)_{\left(-1,1\right]}}\right)\right)\Big)^2\\
    &+ \left(z_y -p^\star_y - c(\tau_{k_{\min}}+\Delta_\tau){\left(\sin(\theta^{\star}_{\mathrm{Tx}, k})+\sin(\Delta_\theta)\right)_{\left(-1,1\right]}}\right)^2\vspace{-3pt} 
\end{aligned}
\end{equation}
with probability of $1$, where\vspace{-5pt} 
\begin{subequations}
    \begin{align}
        C_1&\triangleq\frac{3\sigma^2c^2NT_s^2}{2G|h_{k_{\min}}|^2\pi^2(N^2-1)}-\frac{\psi}{G},\\
        C_2&\triangleq\frac{3\sigma^2c^2\lambda_c^2}{2G|h_{k_{\min}}|^2\pi^2d^2N(N_t^2-1)}.\vspace{-5pt} 
    \end{align}
\end{subequations}
\end{proposition}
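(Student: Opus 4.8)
The plan is to begin from the already-simplified bound of Corollary~\ref{coro:EveLBsimplifed}, namely $\bm\Psi_{\bar{\bm \phi}^\star}={\bm J_{\bar{\bm \phi}^{\star}}^{-1}}+(\bar{\bm\phi}^\star-\bm\phi^{\star})(\bar{\bm\phi}^\star-\bm\phi^{\star})^\mathrm{T}$, and to restrict it to the leading $2\times2$ (Alice-position) block, since $\hat{\bm p}_\text{Eve}$ occupies the first two coordinates of $\hat{\bm\phi}_\text{Eve}$. Denoting that block by $[\,\cdot\,]_{\bm p}$ and taking its trace in the matrix inequality gives $\mathbb{E}\{\|\hat{\bm p}_\text{Eve}-{\bm p}^{\star}\|^2_2\}\geq\operatorname{Tr}([{\bm J_{\bar{\bm \phi}^{\star}}^{-1}}]_{\bm p})+\|\bar{\bm p}^\star-{\bm p}^{\star}\|^2_2$. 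The mismatch term is already the last summand of Equation~\eqref{eq:EveLBsimplify}: substituting $\bar{\bm p}^\star$ from Lemma~\ref{lemma:pseudotruepos} together with Equation~\eqref{eq:mismatchgeometry} expands $\|\bar{\bm p}^\star-{\bm p}^{\star}\|^2_2$ into the two squared coordinate differences displayed there. Hence the entire task reduces to evaluating $\operatorname{Tr}([{\bm J_{\bar{\bm \phi}^{\star}}^{-1}}]_{\bm p})$ and matching it to $C_1+C_2\bar\tau_{k_{\min}}^2/\cos^2(\bar\theta_{\mathrm{Tx},k_{\min}})$.

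Next I would exploit that Eve knows her own position $\bm z$, so the geometric map between $\bar{\bm\phi}^\star$ and the channel-parameter vector $\bar{\bm\eta}$ is a bijection; the Jacobian $\bm\Pi_{\bar{\bm\phi}^\star}=\partial\bar{\bm\eta}/\partial\bar{\bm\phi}^\star$ is then square and invertible, giving ${\bm J_{\bar{\bm \phi}^{\star}}^{-1}}=\bm\Pi_{\bar{\bm\phi}^\star}^{-1}\bm J_{\bar{\bm\eta}}^{-1}\bm\Pi_{\bar{\bm\phi}^\star}^{-\mathrm{T}}$. The orthogonality hypothesis Equation~\eqref{eq:orthopath} makes the channel FIM block-diagonal across paths, and because the Schur complement in Equation~\eqref{eq:efim} that removes the complex gains preserves this structure, $\bm J_{\bar{\bm\eta}}^{-1}$ is block-diagonal with one $2\times2$ block per path. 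By Lemma~\ref{lemma:pseudotruepos} the pseudo-true position $\bar{\bm p}^\star$ depends \emph{only} on the smallest-delay path $(\bar\tau_{k_{\min}},\bar\theta_{\mathrm{Tx},k_{\min}})$, so the position rows of $\bm\Pi_{\bar{\bm\phi}^\star}^{-1}$ are supported on that path's two columns alone. The block therefore collapses to $[{\bm J_{\bar{\bm \phi}^{\star}}^{-1}}]_{\bm p}=\bm H\,\bm J_{\bar{\bm\eta},k_{\min}}^{-1}\,\bm H^\mathrm{T}$, where $\bm H=\partial\bar{\bm p}^\star/\partial(\bar\tau_{k_{\min}},\bar\theta_{\mathrm{Tx},k_{\min}})$ is read directly from Lemma~\ref{lemma:pseudotruepos} and satisfies $\|\partial_{\bar\tau}\bar{\bm p}^\star\|^2_2=c^2$ and $\|\partial_{\bar\theta}\bar{\bm p}^\star\|^2_2=c^2\bar\tau_{k_{\min}}^2$.

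The remaining work is to evaluate the $2\times2$ effective FIM $\bm J_{\bar{\bm\eta},k_{\min}}$ of the smallest-delay path: I would build its per-path $4\times4$ FIM in $(\bar\tau,\bar\theta_{\mathrm{Tx}},\mathfrak{R}\{\gamma\},\mathfrak{I}\{\gamma\})$ and take the Schur complement against the gain pair. Since the delay derivative injects a factor $\propto n$ in the subcarrier index while the angle derivative injects a factor $\propto m\cos(\bar\theta_{\mathrm{Tx},k_{\min}})$ in the antenna index, and the gain removal recenters both indices about their means, the off-diagonal delay–angle entry averages to zero and the diagonal entries separate. Using $\mathbb{E}\{|\bm\alpha(\bar\theta_{\mathrm{Tx},k_{\min}})^\mathrm{H}\bm s^{(g,n)}|^2\}=1$ with $\sum_{n}(n-\bar n)^2=\tfrac{N(N^2-1)}{12}$ and $\sum_{m}(m-\bar m)^2=\tfrac{N_t(N_t^2-1)}{12}$, the $G$-normalized diagonal entries converge to the reciprocals of $\mathrm{CRB}_{\bar\tau}$ and $\mathrm{CRB}_{\bar\theta}$, whose values give $c^2\,\mathrm{CRB}_{\bar\tau}+c^2\bar\tau_{k_{\min}}^2\,\mathrm{CRB}_{\bar\theta}=C_1'+C_2\bar\tau_{k_{\min}}^2/\cos^2(\bar\theta_{\mathrm{Tx},k_{\min}})$ with $C_1'=C_1+\psi/G$ and $h_{k_{\min}}=\gamma^\star_{k_{\min}}$. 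A final substitution of Equation~\eqref{eq:mismatchgeometry} and Lemma~\ref{lemma:pseudotruepos} into these closed forms yields the fully explicit right-hand side of Equation~\eqref{eq:EveLBsimplify}.

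I expect the main obstacle to be precisely the finite-$G$, random-pilot step. The FIM entries are empirical sums over the random beamformers $\bm s^{(g,n)}$, so the clean quantities above are only the $G\to\infty$ limits supplied by the strong law of large numbers; for finite $G$ both the within-path delay–angle coupling and the deviation of the diagonal entries from their means are nonzero random perturbations. These are $O(G^{-1/2})$ relative to the $O(G)$ diagonal terms and hence translate into an $o(1/G)$ additive error on the CRB, which can be dominated by the slack $-\psi/G$ for every $\psi>0$ once $G$ is large. Establishing this domination uniformly and upgrading from an in-expectation statement to an almost-sure one is the delicate part, and it is exactly what forces the quantifier ``there exists a positive integer $\mathcal{G}$ such that when $G\geq\mathcal{G}$'' together with the ``with probability $1$'' qualifier in the statement.
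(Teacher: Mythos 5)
Your proposal is correct and follows essentially the same route as the paper's proof: Corollary~\ref{coro:EveLBsimplifed} plus path orthogonality gives a block-diagonal $\bm J_{\bar{\bm\eta}}$, the inverse-function-theorem structure of $\bm\Pi_{\bar{\bm\phi}^\star}^{-1}$ confines the position block to the $k_{\min}$-th path (your $\bm H$ is the paper's $\bm T_{0,0}$), and the explicit constants come from the large-$G$ asymptotics of the per-path FIM with the $-\psi/G$ slack absorbing the finite-$G$, random-pilot deviations. The only difference is that where the paper outsources that last asymptotic step to an external lemma of \cite{li2023fpi}, you carry out the per-path CRB computation explicitly, which matches the stated $C_1$ and $C_2$.
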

\begin{proof}
    See Appendix \ref{sec:prooflowerboundpos}.
\end{proof}

From Equation \eqref{eq:EveLBsimplify}, we can observe that, using the proposed DAIS strategy for location-privacy enhancement, the adjustment of the design parameters $\Delta_\tau$ and $\Delta_\theta$ can affect Eve's localization accuracy by changing the Alice's pseudo-true location and the associated location-relevant (shifted) channel parameters. Based on Proposition \ref{prop:lowerboundpos}, the impact of the choices of $\Delta_\tau$ and $\Delta_\theta$ are individually elaborated as follows. 

\begin{remark}[Impact of Angle Shift]\label{remark:impactangleshift}
    {To degrade Eve's localization accuracy with the proposed DAIS strategy, the desired value of $\Delta^{d}_\theta$ is given by\vspace{-5pt} 
    \begin{equation}\label{eq:designangle}
    \begin{aligned}
        &\Delta^{d}_\theta \in \mathcal{A}\triangleq\\
        &\left\{\theta|\theta=\arcsin\left(\left(1-\sin(\theta^{\star}_{\mathrm{Tx}, {k_{\min}}})\right)_{\left(-1,1\right]}\right)+2m\pi,m\in\mathbb{Z}\right\} \vspace{-5pt} 
    \end{aligned}
    \end{equation}
    such that $\cos^2(\bar{\theta}_{\mathrm{Tx},{k_{\min}}})$=0 and the lower bound derived in Equation \eqref{eq:EveLBsimplify} goes to infinity, rendering the estimation problem unstable. Under the above design of $\Delta_\theta$, the angle information is not learnable by Eve and thus she cannot accurately infer Alice's position.}
\end{remark}

\begin{remark}[Impact of Delay Shift]\label{remark:impactdelayshift}
    {According to Proposition \ref{prop:lowerboundpos}, the lower bound on the MSE for Eve's localization is positively related with $\bar{\tau}_{k_{\min}}^2$ and $\left\|\bar{\bm p}^\star-{\bm p}^{\star}\right\|^2_2$.
    Hence, given the knowledge of $\bm\eta^\star$ and $\Delta_\theta\notin\mathcal{A}$, we can solve the following problem for the optimal design of $\Delta_\tau$,\vspace{-5pt} 
    \begin{equation}\label{eq:designdelay}
        \Delta_\tau^{d} = \arg\max_{\Delta_\tau} \frac{C_2\bar{\tau}_{k_{\min}}^2}{\cos^2(\bar{\theta}_{\mathrm{Tx},{k_{\min}}})}+\left\|\bar{\bm p}^\star-{\bm p}^{\star}\right\|^2_2. \vspace{-5pt} 
    \end{equation}
    We note that $\left\|\bar{\bm p}^\star-{\bm p}^{\star}\right\|^2_2$ is the distance between Alice's true location and the {spoofed} location. Therefore, in terms of the location-privacy enhancement, if we maximally increase $\left\|\bar{\bm p}^\star-{\bm p}^{\star}\right\|^2_2$, then we virtually move Alice as far away from her true location as possible.}
\end{remark}

\begin{remark}[Knowledge of the CSI]\label{remark:DAISisCSIfree}
We emphasize that, to achieve DAIS, shifting delay and angles with the proposed precoding design in Equation \eqref{eq:daisbeamformer} does not rely on the CSI, while the optimal design is CSI-dependent as observed in Equations \eqref{eq:designangle} and \eqref{eq:designdelay}. However, even without the aid of  CSI, the reduction of Eve's localization accuracy can be guaranteed at high SNRs according to Proposition \ref{prop:degradation}, when Alice's true location $\bar{\bm p}^\star$ is different from her pseudo-true location ${\bm p}^{\star}$. We can easily have $\bar{\bm p}^\star\neq{\bm p}^{\star}$ by setting $\Delta_\tau\neq  mNT_s$ with $m\in\mathbb{Z}$ or $\sin(\Delta_\theta)\neq 0$.
\end{remark}

\vspace{-16pt} 
\section{Robustness of Delay-Angle Information Spoofing}\label{sec:DAISworsecase}
As shown in Section \ref{sec:analysis}, due to the precoding design in Equation \eqref{eq:daisbeamformer} for DAIS, Eve is misled by incorrect locations, \textit{i.e.,} the pseudo-true locations. However, Eve might realize that DAIS is being used for location-privacy enhancement or even have the knowledge of the structure of the proposed precoder. Therefore, in this section, to show the robustness of the proposed DAIS strategy, we analyze Eve’s estimation accuracy for the unknown delay-angle shifts, \textit{i.e.,} $\Delta_\tau$ and $\Delta_\theta$, in the case of precoder structure leakage, and then elaborate upon the challenges that Eve cannot overcome, even if Eve employs multiple antennas.

\vspace{-8pt} 
\subsection{Hardness of Delay-Angle Shifts Estimation} \label{subsec:strucleakage}
Provided that the structure of the precoder designed in Equation \eqref{eq:daisbeamformer} is leaked\footnote{We assume that Eve knows the definition of the precoder ${\bm \Phi}^{(n)}$, but does not have the knowledge of $\Delta_\tau$ and $\Delta_\theta$ in this case.}, Eve will  endeavor to infer $\bm\Delta$ to compensate for the obfuscation; however, this is a statistically hard estimation problem if $\bm\Delta$ is an unknown deterministic vector according to the following proposition. 

\begin{proposition}\label{prop:strucleakge} Assume that $\bm\Delta$ is an unknown deterministic vector. Let ${\bm\chi}\triangleq\left[({\bm\tau}^\star)^\mathrm{T},({\bm\theta}_{\mathrm{Tx}}^\star)^{\mathrm{T}},\mathfrak{R}\{(\bm\gamma^{\star})^\mathrm{T}\},\mathfrak{I}\{(\bm\gamma^{\star})^\mathrm{T}\},\bm\Delta^\mathrm{T}\right]^{\mathrm{T}}\in\mathbb{R}^{(4K+6)\times 1}$ and $\bm J_{{\bm \chi}}\in\mathbb{R}^{(4K+6)\times (4K+6)}$ be a vector of the unknown channel parameters and the associated FIM, respectively. Then, $\bm J_{{\bm \chi}}$ is a singular matrix.
\end{proposition}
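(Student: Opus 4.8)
The plan is to exploit the fact that, after DAIS precoding, the noiseless received signal depends on the enlarged parameter vector $\bm\chi$ only through the \emph{shifted} (virtual) channel parameters $\bar{\bm\xi}=[\bar{\bm\tau}^\mathrm{T},\bar{\bm\theta}_{\mathrm{Tx}}^{\mathrm{T}},\mathfrak{R}\{\bm\gamma^{\star}\},\mathfrak{I}\{\bm\gamma^{\star}\}]^{\mathrm{T}}$, and to turn this into a rank deficiency via a chain-rule factorization of the FIM. From Equation \eqref{eq:daisrsignal}, the mean signal $\overline{\varpi}^{(g,n)}=\bar{\bm h}^{(n)}\bm s^{(g,n)}$ is a function of $\bm\chi$ only through $\bar{\bm\xi}$, since $\bar{\bm h}^{(n)}$ is assembled entirely from $\{\bar\tau_k,\bar\theta_{\mathrm{Tx},k},\gamma^\star_k\}$. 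Writing $\overline{\varpi}^{(g,n)}(\bm\chi)=\tilde\varpi^{(g,n)}(\bar{\bm\xi}(\bm\chi))$ and applying the chain rule, $\partial\overline{\varpi}^{(g,n)}/\partial\bm\chi=(\partial\tilde\varpi^{(g,n)}/\partial\bar{\bm\xi})\,\bm T$, where $\bm T\triangleq\partial\bar{\bm\xi}/\partial\bm\chi$ is the real Jacobian of the shift map in Equation \eqref{eq:mismatchgeometry}, augmented by the identity on the $\gamma^\star$ block. Substituting this into the Gaussian (Slepian--Bangs) FIM of Equation \eqref{eq:FIMce} yields the factorization $\bm J_{\bm\chi}=\bm T^{\mathrm{T}}\bm J_{\bar{\bm\xi}}\bm T$, with $\bm J_{\bar{\bm\xi}}$ the FIM for the virtual parameters.

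The decisive step is then a dimension count. The vector $\bm\chi\in\mathbb{R}^{4K+6}$ carries two extra coordinates, namely $\Delta_\tau$ and $\Delta_\theta$, beyond the $4(K+1)=4K+4$ coordinates of $\bar{\bm\xi}$; hence $\bm T$ is a wide $(4K+4)\times(4K+6)$ matrix, so $\operatorname{rank}(\bm T)\leq 4K+4$ and its null space has dimension at least two. I would then conclude $\operatorname{rank}(\bm J_{\bm\chi})=\operatorname{rank}(\bm T^{\mathrm{T}}\bm J_{\bar{\bm\xi}}\bm T)\leq\operatorname{rank}(\bm T)<4K+6$, so the $(4K+6)\times(4K+6)$ matrix $\bm J_{\bm\chi}$ is singular, regardless of whether $\bm J_{\bar{\bm\xi}}$ itself is full rank.

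To make the degeneracy concrete and give intuition, I would exhibit two vectors annihilated by $\bm T$. Because $\bar\tau_k=(\tau^\star_k+\Delta_\tau)_{(0,NT_s]}$ depends on $\tau^\star_k$ and $\Delta_\tau$ only through their sum, the direction that increments $\Delta_\tau$ while decrementing every $\tau^\star_k$ by the same amount leaves all $\bar\tau_k$ (and the rest of $\bar{\bm\xi}$) unchanged; likewise, since $\bar\theta_{\mathrm{Tx},k}$ depends on $\sin(\theta^\star_{\mathrm{Tx},k})+\sin(\Delta_\theta)$, the direction $\delta\Delta_\theta=1$, $\delta\theta^\star_{\mathrm{Tx},k}=-\cos(\Delta_\theta)/\cos(\theta^\star_{\mathrm{Tx},k})$ fixes every $\bar\theta_{\mathrm{Tx},k}$. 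Each such vector lies in $\ker(\bm T)\subseteq\ker(\bm J_{\bm\chi})$, confirming the (at least) two-dimensional null space and exposing the true non-identifiability: the data cannot separate the shifts from the physical delays and angles.

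The main obstacle is the careful setup of the factorization rather than any hard computation. One must note that the wrapping operators $(\cdot)_{(0,NT_s]}$ and $(\cdot)_{(-1,1]}$ are only piecewise smooth, so the argument is local and requires the true parameter to sit away from the wrapping boundaries (a generic, measure-one condition) for $\bm T$ to be well defined; away from those boundaries the wrapping acts as a pure constant offset, the relevant partial derivatives are unity, and the chain rule applies verbatim. Once $\bm J_{\bm\chi}=\bm T^{\mathrm{T}}\bm J_{\bar{\bm\xi}}\bm T$ is established, singularity follows immediately from the rank bound.
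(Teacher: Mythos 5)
Your proposal is correct and rests on the same core observation as the paper's proof: the noiseless observation depends on $(\tau^\star_k,\Delta_\tau)$ only through the sums $\tau^\star_k+\Delta_\tau$ and on $(\theta^\star_{\mathrm{Tx},k},\Delta_\theta)$ only through $\sin(\theta^\star_{\mathrm{Tx},k})+\sin(\Delta_\theta)$, and your two explicit null-space directions are exactly the linear-dependence relations $\partial\varpi/\partial\Delta_\tau=\sum_k\partial\varpi/\partial\tau^\star_k$ and $\partial\varpi/\partial\Delta_\theta=\sum_k\bigl(\cos(\Delta_\theta)/\cos(\theta^\star_{\mathrm{Tx},k})\bigr)\partial\varpi/\partial\theta^\star_{\mathrm{Tx},k}$ that the paper computes directly. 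Your chain-rule factorization $\bm J_{\bm\chi}=\bm T^{\mathrm{T}}\bm J_{\bar{\bm\xi}}\bm T$ with the wide-Jacobian rank count is simply a cleaner repackaging of that same argument, with the added (and reasonable) care about the piecewise-smooth wrapping operators that the paper leaves implicit.
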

\begin{proof}
    See Appendix \ref{sec:proofstrucleakge}.
\end{proof}

In contrast to \cite{li2023fpi} where Eve can compensate when she knows the structure of the designed precoder, herein, with DAIS, Eve still cannot distinguish the shifts introduced by the precoder from the true delay and angle information according to Proposition \ref{prop:strucleakge}. Thus, DAIS is robust to precoder leakage.  Our result also indicates that the delay-angle shifts can be preset according to a certain protocol between Alice and Bob, significantly reducing the overhead caused by the transmission of the delay-angle shifts via the secure channel.

\vspace{-8pt} 
\subsection{Challenges for a Multi-Antenna Eavesdropper}

The efficacy of the proposed DAIS strategy has been theoretically validated in Section \ref{sec:analysis}. However, \rev{after} realizing that the true TOAs and AODs are artificially shifted, Eve can endeavor to employ multi-antennas to estimate the angle-of-arrivals (AOA)  for the inference of Alice's location. However, without accurate estimates for TOAs and AODs, achieving high-accuracy localization with a single eavesdropper is still a challenging problem in a multipath environment. To accurately infer Alice's location, Eve has to address the following problems.
\begin{enumerate}
    \item[P1)] {\textbf{Identification of the LOS path}: Assume the path with the smallest delay to be the LOS path for localization as in \cite{Shahmansoori,LOCMAN,Ayyalasomayajula}. In the presence of the NLOS paths, due to the delay shift and phase wrapping, DAIS will misled Eve to incorrectly leverage the channel estimates associated with the $k_{\min}$-th path to infer Alice's location, according to the analysis in Section \ref{sec:analysis}. If $k_{\min}\neq0$, the original LOS path does not have the smallest shifted TOA among all the paths. The location of a scatterer will be shifted and incorrectly estimated, resulting in an erroneous location for Alice due to the geometric mismatch, as shown in Figure \ref{fig:sysmimo}.}
    \item[P2)] {\textbf{Localization without collaborators}: If there is only one receiver, the lack of either delay information or angle information will typically result in ambiguity for a location estimate \cite{zekavat2019handbook}. Recall that a single eavesdropper is assumed in this paper, while the TOAs and AODs are artificially shifted via precoding. Hence, precisely acquiring Alice's location is challenging for Eve. We note that the AOAs are not spoofed in the proposed DAIS design. However, even assuming that Eve obtains perfect knowledge of the AOA of the LOS path, from Figure \ref{fig:sysmimo}, we can still observe infinitely many possible locations for Alice solely based on this AOA. A better localization algorithm design is needed for Eve to combat DAIS when there are no collaborators.}

    \begin{figure}[t]
    \centering
    \includegraphics[scale=0.46]{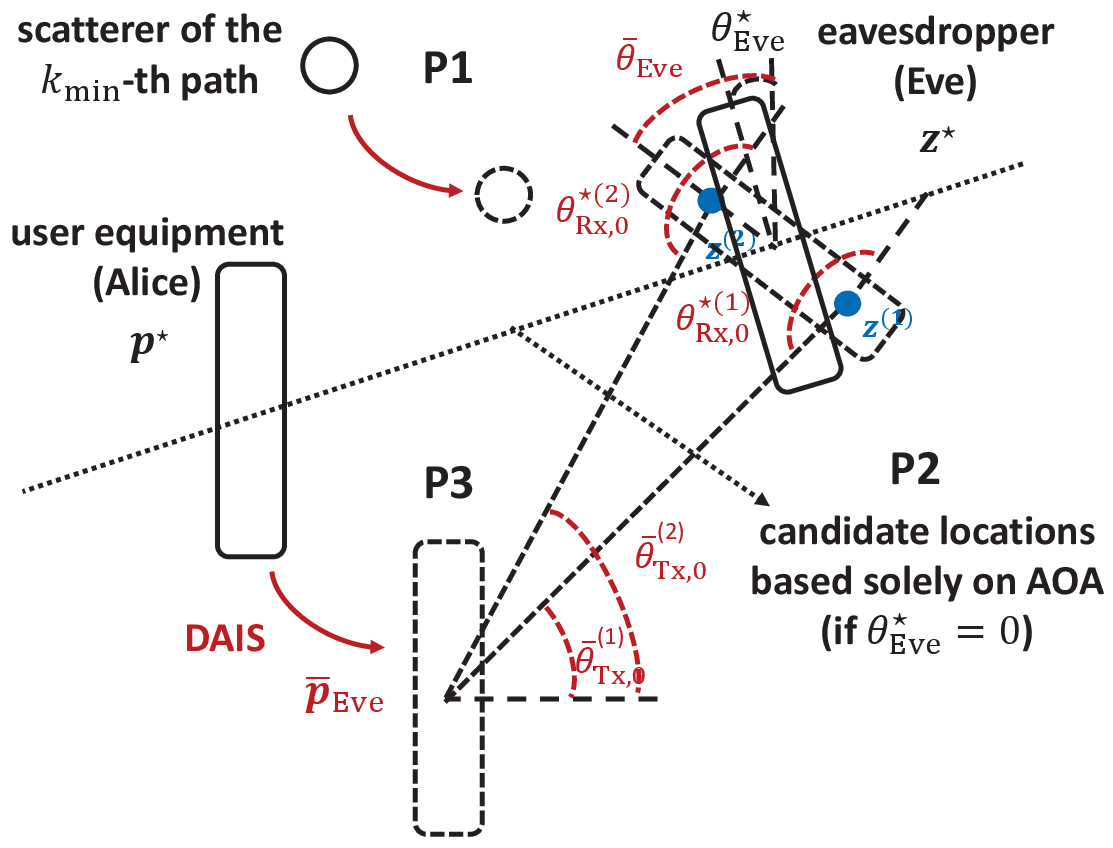}\vspace{-5.2pt}
    \caption{Depiction of orientation angle challenge for Eve when using DAIS in a multi-antenna eavesdropper.}\vspace{-15pt}
    \label{fig:sysmimo}
    \end{figure}
    \item[P3)]  {\textbf{Estimation of orientation angle of Eve's antenna array}: Suppose that Eve aims to accurately infer Alice's location based on AOA estimates, precise knowledge of the orientation angle of Eve's antenna array is also required. Denote by $\theta^{\star}_{\mathrm{Rx},0}$ and $\theta^{\star}_{\text{Eve}}$ the AOA of the LOS path and the true orientation angle of Eve's antenna array, respectively. According to \cite{LOCMAN,Shahmansoori}, without the location-privacy preservation mechanisms, the orientation angle can be precisely estimated for a MIMO system given good quality estimates for the TOA, AOD, and AOA, via the geometry as follows,\vspace{-5pt} 
    \begin{equation}\label{eq:orientation}
        {\theta}_{\text{Eve}}  =  \pi + {{\theta}}^{\star}_{\mathrm{Tx},0} - {\theta}^{\star}_{\mathrm{Rx},0},\vspace{-5pt} 
    \end{equation}
    However, using the proposed DAIS strategy, accurate TOA and AOD estimates are unavailable for orientation angle estimation and Eve has to resolve the ambiguity caused by the unknown orientation angle. }
    \par{To further illustrate the hardness of accurate localization without the knowledge of $\theta^{\star}_{\text{Eve}}$, we consider a relatively benign situation for Eve, where only the LOS path exists and Eve has perfect knowledge of channel parameters, \textit{i.e.,} AOA, shifted AOD, and shifted TOA. In such a case, Eve does not have to address problem P1. Without the aid of collaborators, we also assume that Eve adopts a sub-array approach as in \cite{subarray1,subarray2} to estimate Alice's location based solely on AOAs, where Eve’s antenna array can be divided into several non-overlapping sub-arrays and each sub-array can provide a distinct AOA estimate for localization. \rev{Eve is still in the far-field region for each sub-array (see \cite{subarray2}) and Equation \eqref{eq:channel_subcarrier} still holds.} As the intersection of at least two lines of bearing (LOB) will provide an location estimate \cite{zekavat2019handbook}, for simplicity, as shown in Figure \ref{fig:sysmimo}, Eve's antenna array is divided into two sub-arrays and the associated shifted AODs and AOAs are represented as $\bar{\theta}^{(i)}_{\mathrm{Tx},0}$ and $\theta^{\star(i)}_{\mathrm{Rx},0}$, with $i=1,2$.} Given a value for the orientation angle of Eve's antenna array $\bar{\theta}_{\text{Eve}}$, the locations of these two sub-arrays, denoted as $\boldsymbol{z}^{(1)}=\left[z^{(1)}_{x},z^{(1)}_{y}\right]^{\mathrm{T}}\in\mathbb{R}^{2\times1}$ and $\boldsymbol{z}^{(2)}=\left[z^{(2)}_{x},z^{(2)}_{y}\right]^{\mathrm{T}}\in\mathbb{R}^{2\times1}$, can be derived based on Eve’s location $\boldsymbol{z}^{\star}$,\vspace{-5pt} 
\begin{equation}\label{eq:subarrayloc}
\begin{aligned}
    \boldsymbol{z}^{(1)} &= \left[\boldsymbol{z}^{\star}_x+\frac{D}{4}\sin\left(\bar{\theta}_{\text{Eve}}\right), \boldsymbol{z}^{\star}_y-\frac{D}{4}\cos\left(\bar{\theta}_{\text{Eve}}\right)\right]^{\mathrm{T}},\\
    \boldsymbol{z}^{(2)} &= \left[\boldsymbol{z}^{\star}_x-\frac{D}{4}\sin\left(\bar{\theta}_{\text{Eve}}\right), \boldsymbol{z}^{\star}_y+\frac{D}{4}\cos\left(\bar{\theta}_{\text{Eve}}\right)\right]^{\mathrm{T}},\vspace{-5pt} 
\end{aligned}   
\end{equation}
while the two LOBs are characterized by\vspace{-5pt} 
\begin{equation}\label{eq:lobs}
    \begin{aligned}
        &\bar{{p}}^{(1)}_y = \tan\left({\theta}^{\star(1)}_{\mathrm{Rx},0}+\bar{\theta}_{\text{Eve}}\right)\left(\bar{{p}}^{(1)}_x - {z}^{(1)}_x\right)+{z}^{(1)}_y,\\
        & \bar{{p}}^{(2)}_y = \tan\left({\theta}^{\star(2)}_{\mathrm{Rx},0}+\bar{\theta}_{\text{Eve}}\right)\left(\bar{{p}}^{(2)}_x - {z}^{(2)}_x\right)+{z}^{(2)}_y.\vspace{-5pt} 
    \end{aligned}
\end{equation}
For a given value of $\bar{\theta}_{\text{Eve}}$, $\bar{\boldsymbol{p}}^{(1)}=\left[\bar{p}^{(1)}_{x},\bar{p}^{(1)}_{y}\right]^{\mathrm{T}}\in\mathbb{R}^{2\times1}$ and $\bar{\boldsymbol{p}}^{(2)}=\left[\bar{p}^{(2)}_{x},\bar{p}^{(2)}_{y}\right]^{\mathrm{T}}\in\mathbb{R}^{2\times1}$ represent the locations estimated based on ${\theta}^{\star(1)}_{\mathrm{Rx},0}$ and ${\theta}^{\star(2)}_{\mathrm{Rx},0}$, respectively. 
We use $\bar{\boldsymbol{p}}=\left[\bar{p}_{x},\bar{p}_{y}\right]^{\mathrm{T}}\in\mathbb{R}^{2\times1}$ for Eve’s incorrect perception of Alice’s location based on the AOAs and orientation angle. By solving the system of linear equations in Equation \eqref{eq:lobs} via setting $\bar{\boldsymbol{p}}^{(1)} = \bar{\boldsymbol{p}}^{(2)}$, and substituting Equation \eqref{eq:subarrayloc} into Equation \eqref{eq:lobs}, such a perceived location can be expressed as\vspace{-5pt} 
\begin{equation}\label{eq:hatpaoa}
\begin{aligned}
    &\bar{{p}}_{x}=  \\
    &z_x-\frac{D\sin\left(\bar{\theta}_{\text{Eve}}\right)\sum_{i=1}^2\tan\left({\theta}^{\star(i)}_{\mathrm{Rx},0}+\bar{\theta}_{\text{Eve}}\right)}{4\left(\tan\left({\theta}^{\star(2)}_{\mathrm{Rx},0}+\bar{\theta}_{\text{Eve}}\right)-\tan\left({\theta}^{\star(1)}_{\mathrm{Rx},0}+\bar{\theta}_{\text{Eve}}\right)\right)} \\
    &- \frac{D\cos\left(\bar{\theta}_{\text{Eve}}\right)}{2\left(\tan\left({\theta}^{\star(2)}_{\mathrm{Rx},0}+\bar{\theta}_{\text{Eve}}\right)-\tan\left({\theta}^{\star(1)}_{\mathrm{Rx},0}+\bar{\theta}_{\text{Eve}}\right)\right)},\\
    &\bar{{p}}_{y}= \\
    &z_y-\frac{D\cos\left(\bar{\theta}_{\text{Eve}}\right)\sum_{i=1}^2\tan\left({\theta}^{\star(2)}_{\mathrm{Rx},0}+\bar{\theta}_{\text{Eve}}\right)}{4\left(\tan\left({\theta}^{\star(2)}_{\mathrm{Rx},0}+\bar{\theta}_{\text{Eve}}\right)-\tan\left({\theta}^{\star(1)}_{\mathrm{Rx},0}+\bar{\theta}_{\text{Eve}}\right)\right)}\\
    & - \frac{D\sin\left(\bar{\theta}_{\text{Eve}}\right)\tan\left({\theta}^{\star(1)}_{\mathrm{Rx},0}+\bar{\theta}_{\text{Eve}}\right)\tan\left({\theta}^{\star(1)}_{\mathrm{Rx},0}+\bar{\theta}_{\text{Eve}}\right)}{2\left(\tan\left({\theta}^{\star(2)}_{\mathrm{Rx},0}+\bar{\theta}_{\text{Eve}}\right)-\tan\left({\theta}^{\star(1)}_{\mathrm{Rx},0}+\bar{\theta}_{\text{Eve}}\right)\right)}.\vspace{-5pt} 
\end{aligned}    
\end{equation}
From Equation \eqref{eq:hatpaoa}, we can observe that Eve's perception of Alice's location is affected by the orientation angle of Eve's antenna array. If Eve still leverages Equation \eqref{eq:orientation} to acquire her orientation angle, \textit{e.g.}\footnote{For a small-sized antenna array, $ {\bar{\theta}}^{(1)}_{\mathrm{Tx},0} - {\theta}^{\star(1)}_{\mathrm{Rx},0}\approx{\bar{\theta}}^{(2)}_{\mathrm{Tx},0} - {\theta}^{\star(2)}_{\mathrm{Rx},0}$ holds.},\vspace{-5pt} 
\begin{equation}
    \bar{\theta}_{\text{Eve}}  =  \pi + {\bar{\theta}}^{(1)}_{\mathrm{Tx},0} - {\theta}^{\star(1)}_{\mathrm{Rx},0},\vspace{-5pt} 
\end{equation}
under the perturbation of the angle shift as in Equation \eqref{eq:mismatchgeometry}, the location perceived by Eve can be far away from the Alice's true location, as illustrated in Figure \ref{fig:sysmimo}. 
In the presence of DAIS, the distance between Alice's true location $\boldsymbol{p}^\star$ and Eve's incorrect perception $\bar{\boldsymbol{p}}_{\text{Eve}}$ is plotted in Figure \ref{fig:incorrectperceptionMIMO}, where we assume that Eve is equipped with $16$ receive antennas and no NLOS paths are considered herein. The {values for the other signal parameters} are the same as that will be given in Section \ref{subsec:signalparams}. As observed in Figure \ref{fig:incorrectperceptionMIMO}, even in such a benign situation, the deviation of Eve's incorrect perception of Alice's location from the true one can be greater than $1$ m when $|\Delta_\theta|>0.1$ is used for DAIS\footnote{In contrast, angle information is not spoofed in \cite{Ayyalasomayajula} so Eve can estimate AOAs, AODs, as well as orientation angle as in Equation \eqref{eq:orientation}.  Thus, {for the same situation}, inferring Alice's location solely based on angle information will still yield location-privacy leakage in \cite{Ayyalasomayajula}.}.
Taking the multipath effects and the error in the channel estimates into consideration, we note that Eve's inference of Alice's location is even more challenging in practice. Extending DAIS to MIMO systems and collaborative eavesdroppers is a topic for future research.

    \begin{figure}[t]
    \centering
    \includegraphics[scale=0.52]{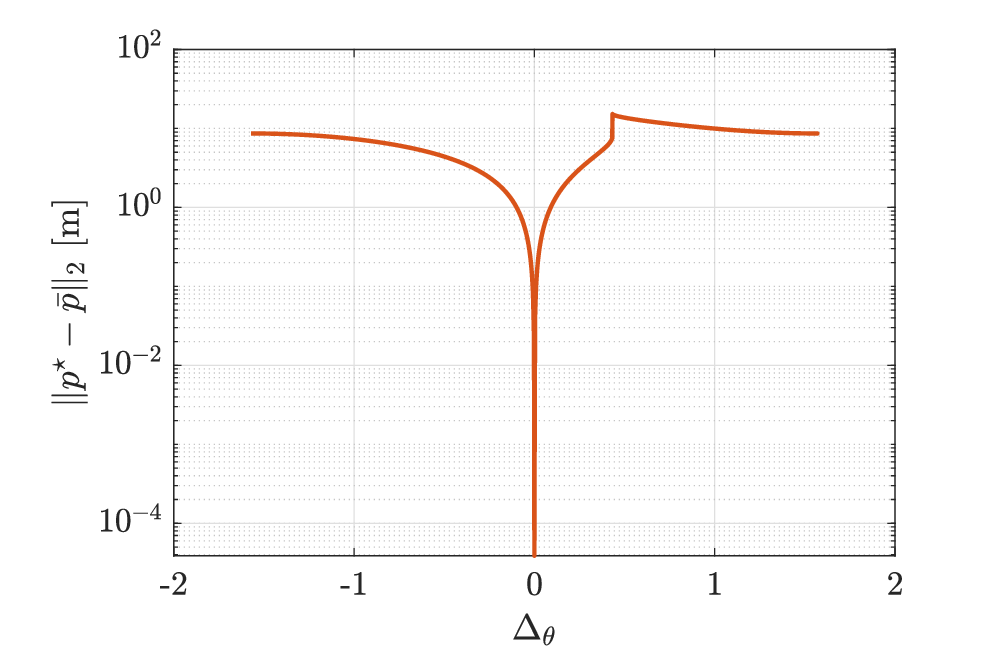}\vspace{-6pt}
    \caption{Deviations of the location perceived by a multi-antenna eavesdropper from the UE's true location, resulting from DAIS.} \vspace{-15pt}
    \label{fig:incorrectperceptionMIMO}
    \end{figure}
    
\end{enumerate}

\vspace{-9pt} 
\section{Simulation Results}\label{sec:sim}
In this section, to show that the proposed DAIS scheme effectively protects UE's location from being accurately estimated by an eavesdropper, we numerically evaluate the lower bounds on eavesdropper's localization error derived in Section \ref{subsec:eveerror}, which is the best performance that the eavesdropper can achieve with a misspecified-unbiased estimator. To be more specific, the theoretical analyses for the impact of parameter design presented in Section \ref{subsec:parameterdesign} are numerically validated. Then, the localization error achieved with the proposed DAIS strategy is measured against that without any location-privacy preservation to show the enhanced location privacy. In addition, comparisons with a CSI-free FPI scheme \cite{li2023fpi} and a CSI-dependent beamforming design \cite{Ayyalasomayajula} are also provided to validate the efficacy of the proposed DAIS. \rev{Finally, the accuracy degradation for a multi-antenna eavesdropper is investigated as well to show the robustness of the proposed strategy.}

\vspace{-6pt} 
\subsection{Signal Parameters}\label{subsec:signalparams}
In all of the numerical results, we set the parameters $K$, $\varphi_c$, $B$, $c$, $G$, $N_t$, $N$ to $2$, $60$ GHz, $30$ MHz, $300$ \rev{m/$\mu$s}, $16$, $16$, and $16$, respectively. For the channel model adopted in \eqref{eq:channel_subcarrier}, channel coefficients are numerically generated according to the free-space path loss model \cite{Goldsmith} while we place the scatterers of the two NLOS paths at $[7.44 \text{ m}, 8.53 \text{ m}]^{\mathrm{T}}$ and $[8.87\text{ m}, -6.05 \text{ m}]^{\mathrm{T}}$, respectively. Alice is at $[3 \text{ m},0 \text{ m}]^{\mathrm{T}}$ and the transmitted pilot signals are certain random, complex values uniformly generated on the unit circle.  For a fair comparison, Bob and Eve are placed at the same location $[10 \text{ m},5 \text{ m}]^{\mathrm{T}}$ so that the same received signals are leveraged for their individual localization. The RMSE for Bob's and Eve’s localization are defined as\vspace{-3pt} 
\begin{equation}
\begin{aligned} \operatorname{RMSE}_{\text{Bob}}&\triangleq\sqrt{\bm\Xi_{\bm \phi^{\star}}[1,1]+\bm\Xi_{\bm \phi^{\star}}[2,2]},\\    \operatorname{RMSE}_{\text{Eve}}&\triangleq\sqrt{\bm\Psi_{\bar{\bm \phi}^\star}[1,1]+\bm\Psi_{\bar{\bm \phi}^\star}[2,2]},  \vspace{-3pt}
\end{aligned}
\end{equation}
respectively. Unless otherwise stated, the SNR is defined as $\operatorname{SNR}\triangleq10\log_{10}\frac{\sum^{G}_{g=1}\sum^{N-1}_{n=0}|{\bar{u}}^{(g,n)}|^{2}}{NG\sigma^2}$.

\begin{figure}[t]
\centering
\includegraphics[scale=0.52]{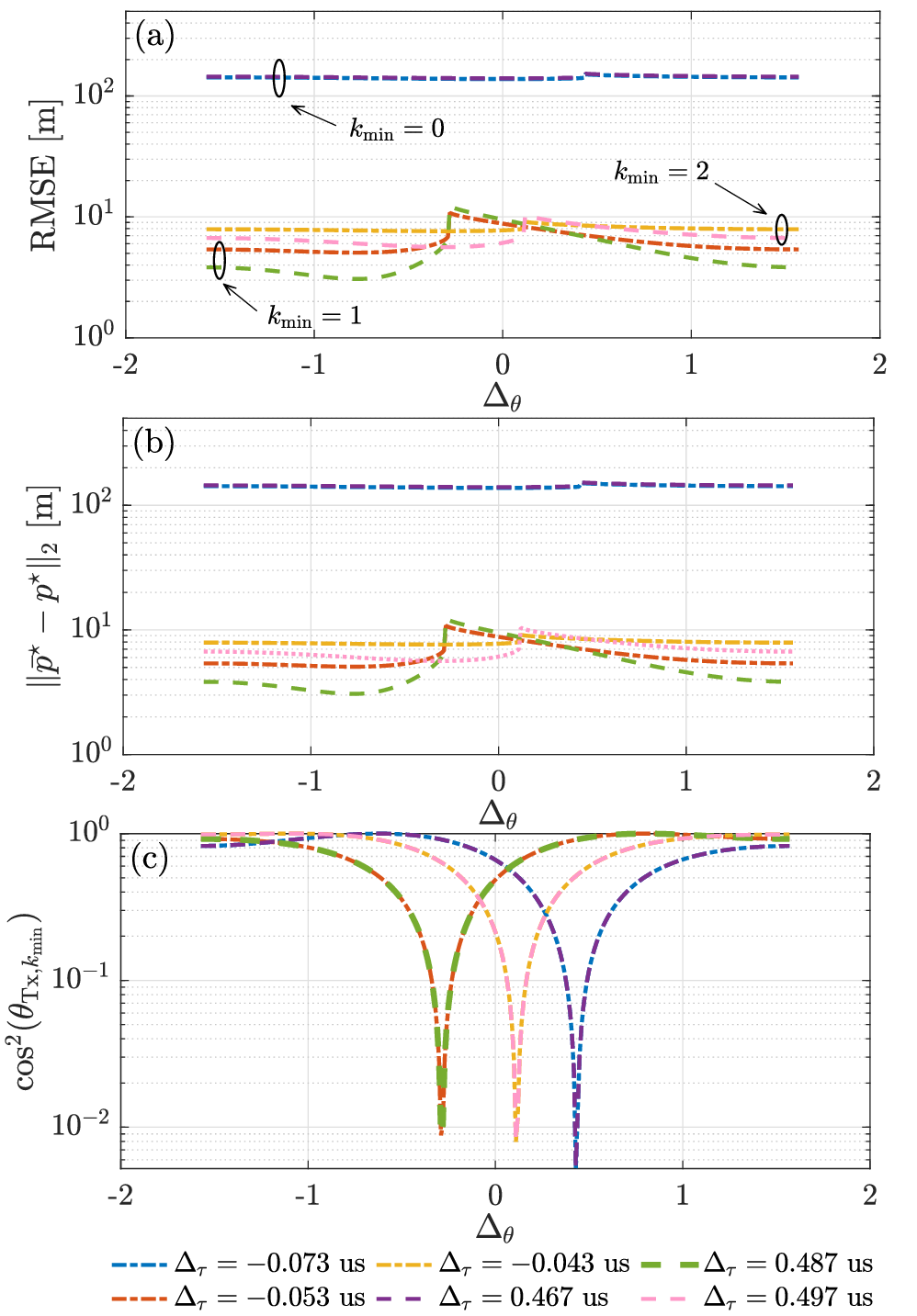}\vspace{-5pt}
\caption{{(a) Lower bounds for the RMSE of Eve’s localization with different choices of $\bm\Delta$, where $k_{\min}=k$ means that the $k$-th path with the smallest shifted TOA is assumed to be the LOS path for localization due to phase wrapping according to Equation \eqref{eq:ptrue}; (b) The associated values of $\|\bar{\bm p}^{\star}-\bar{\bm p}\|$; (c) The associated values of $\cos^2(\bar{\theta}_{\mathrm{Tx},k_{\min}})$.}}
\label{fig:parameterdesign} \vspace{-12pt}
\end{figure}

\begin{figure}[t]
\centering
\includegraphics[scale=0.5]{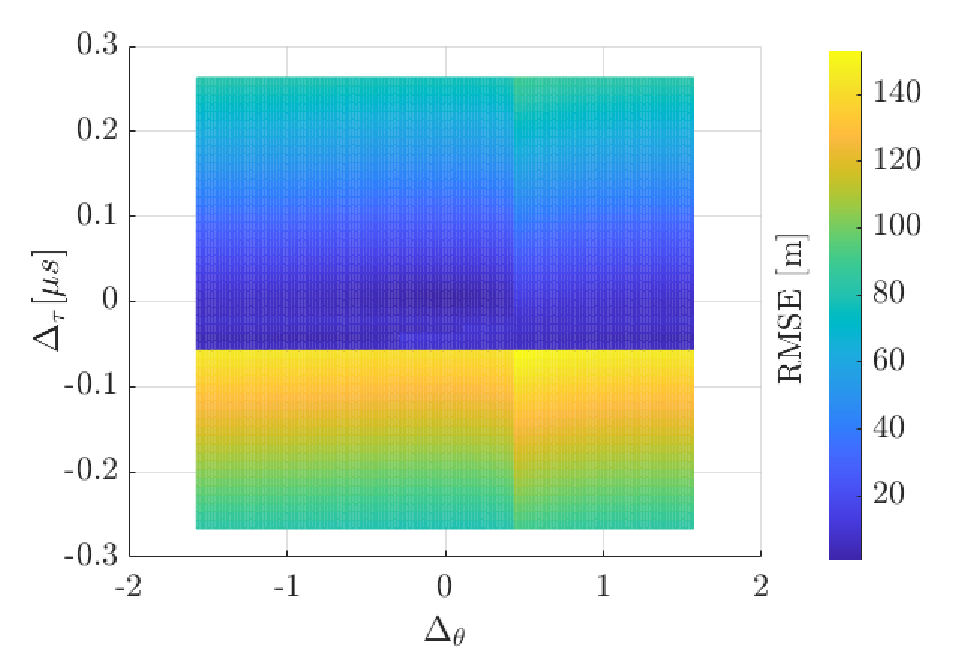}\vspace{-5pt}
\caption{Lower bounds for the RMSE of Eve’s localization with more choices of $\bm\Delta$.}
\label{fig:parameterdesignmorechoices} \vspace{-15pt}
\end{figure}

\vspace{-6pt} 
\subsection{Validation of Theoretical Analyses in Section \ref{subsec:parameterdesign}}\label{subsec:validation}
Without loss of generality, we assume the path produced by the scatterer at $[8.87\text{ m}, -6.05 \text{ m}]^{\mathrm{T}}$ is the first NLOS path, \textit{i.e.}, $k=1$, and the other is the second NLOS path, \textit{i.e.}, $k=2$. The numerical results in this subsection are generated with SNR being $20$ dB. Under different choices of $\Delta_\tau$ and $\Delta_\theta$, the corresponding RMSEs for Eve's localization accuracy are provided in Figure \ref{fig:parameterdesign} (a). From Figure \ref{fig:parameterdesign} (a), it can be observed that variations of the RMSE are more pronounced when value of $\Delta_\tau$ is changed. This indicates that the adjustment of $\Delta_\theta$ has a relatively less influence on Eve’s performance degradation provided that $\cos^2(\bar{\theta}_{\mathrm{Tx},k_{\min}})$ is not close to zero. {Based on Figure \ref{fig:parameterdesign} (a), we underscore the non-linear and non-monotonic effect of changing the delay-angle shifts on the localization accuracy; hence the need for the analysis to understand the impact of these design parameters. 

Furthermore, for the given choices of $\Delta_\tau$ and $\Delta_\theta$, the associated values for $\|\bar{\bm p}^{\star}-\bar{\bm p}\|$ and $\cos^2(\bar{\theta}_{\mathrm{Tx},k_{\min}})$ are shown in Figure \ref{fig:parameterdesign} (b) and (c), respectively. We note that, although the orthogonality among the paths does not strictly hold in the numerical results, relatively higher RMSEs are still achieved at large values of $\|\bar{\bm p}^{\star}-\bar{\bm p}\|$ and small values of $\cos^2(\bar{\theta}_{\mathrm{Tx},k_{\min}})$ according to Figure \ref{fig:parameterdesign} (b) and (c), which coincides with the theoretical analyses of the parameter designs in Remarks \ref{remark:impactangleshift} and \ref{remark:impactdelayshift}. Hence, to significantly increase Eve's localization error with the proposed DAIS strategy, one option is to increase the distance between Alice's true and pseudo-true locations as much as possible via the design of the delay-angle shifts; under the knowledge of $k_{\min}$ and ${\theta}^{\star}_{\mathrm{Tx},{k_{\min}}}$, another option is to adjust $\Delta_\theta$ such that the line between Alice's pseudo-true location and Eve's location is parallel to Alice's antenna array.

Although the optimal design of parameters $\Delta_\tau$ and $\Delta_\theta$ relies on the CSI, the proposed DAIS strategy still can enhance location privacy without  CSI, as analyzed in Remark \ref{remark:DAISisCSIfree}.  The lower bounds for Eve's localization error with more choices of $\boldsymbol{\Delta}$ are provided in Figure \ref{fig:parameterdesignmorechoices}. The value of $\Delta_\tau$ is changed from $-\frac{NT_s}{2}$ to $\frac{NT_s}{2}$ with the resolution of $0.01$ $\mu s$ while $\Delta_\theta$ ranges from $-\frac{\pi}{2}$ to $\frac{\pi}{2}$ with the resolution being 0.01. Consistent with the analysis in Remark \ref{remark:DAISisCSIfree}, when the values of $\Delta_\tau$ and $\Delta_\theta$ deviate from zeros, Eve's localization accuracy is strongly degraded as shown in Figure \ref{fig:parameterdesignmorechoices}. We underscore that the values of $N$ and $B$ can strongly influence these error bounds as well. {In the absence of CSI}, we can randomly and uniformly generate the delay and angle shift over $\left[-\frac{NT_s}{2},\frac{NT_s}{2}\right]$ and $\left[-\frac{\pi}{2},\frac{\pi}{2}\right]$, respectively. {The numerical results averaged over these channel parameters will be provided in the sequel.}

\vspace{-9pt} 
\subsection{Estimation Accuracy Comparison}\label{subsec:accuracycomparison}

\begin{figure}[t]
\centering
\includegraphics[scale=0.52]{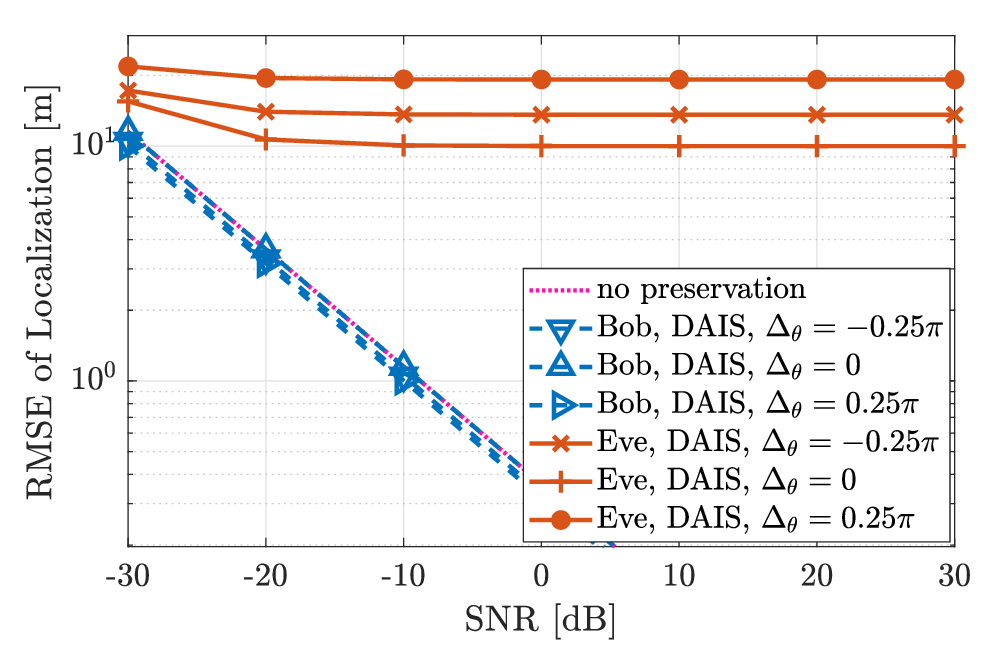}\vspace{-10pt}
\caption{{Lower bounds for the RMSE of localization with different choices of $\Delta_\theta$, where $\Delta_\tau=T_s$.}}
\label{fig:rmse_angle} \vspace{-12pt}
\end{figure}
\begin{figure}[t]
\centering
\includegraphics[scale=0.52]{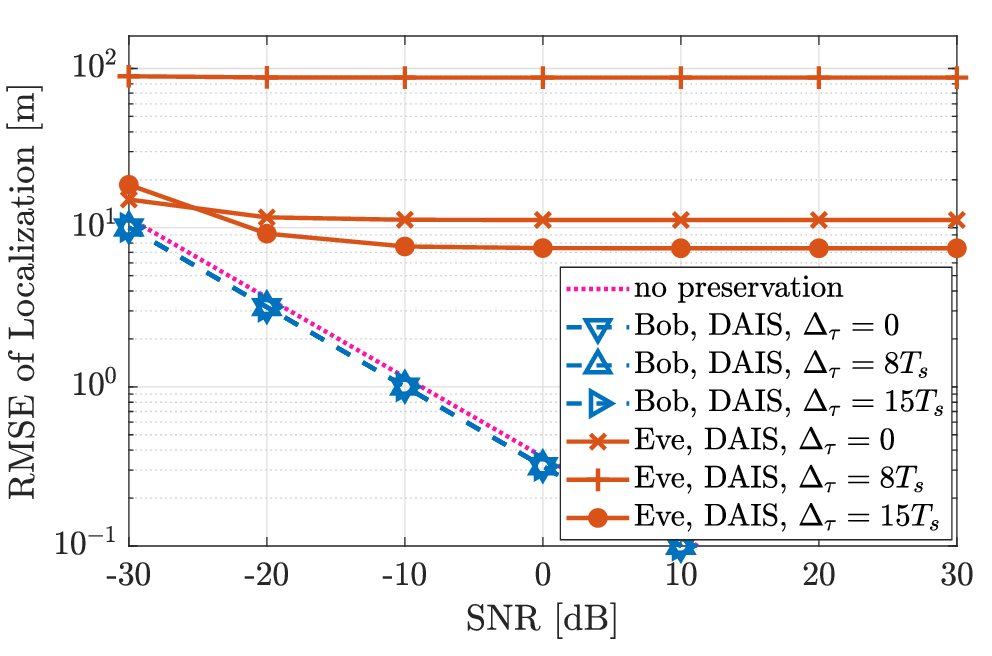}\vspace{-6pt}
\caption{{Lower bounds for the RMSE of localization with different choices of $\Delta_\tau$, where $\Delta_\theta=0.25\pi$.}}
\label{fig:rmse_delay}\vspace{-15pt}
\end{figure}

Using the proposed DAIS strategy for location-privacy enhancement, the RMSEs for Bob's and Eve's localization are provided in Figure \ref{fig:rmse_angle}, where $\Delta_\tau$ is fixed at $T_s$ and $\Delta_\theta$ is set to $-0.25\pi$, $0$, and $0.25\pi$, respectively. Since the delay-angle shifts are securely shared with Bob such that he can construct the effective pilot signal $\bar{\bm s}^{(g,n)}$ for localization, the geometric mismatch caused by the proposed DAIS strategy can be removed and as compared with the case without any location-privacy preservation mechanisms, negligible loss 
is observed for Bob in Figure \ref{fig:rmse_angle}. In contrast, the lack of the knowledge of $\Delta_\tau$ and $\Delta_\theta$ results in Eve's incorrect perception of Alice's location, {yielding a strong degradation of Eve's localization accuracy as shown in Figure \ref{fig:rmse_angle}}. To be more specific, when $\operatorname{SNR}$ and $\Delta_\theta$ is set to $0$ dB and $0.25\pi$, respectively, $\operatorname{RMSE}_{\text{Bob}}$ can be maintained at around $0.32$ m by virtue of the shared information, while obfuscation introduced by DAIS increases $\operatorname{RMSE}_{\text{Eve}}$ to around $19.22$ m. Consistent with the analysis in Section \ref{sec:degradationdais}, at high SNRs, Eve’s localization error is mainly affected by the distance between Alice's true location and the associated pseudo-true location; such distances are around $19.22$ m, $10.00$ m, and $13.61$ m for $\Delta_\theta=0.25\pi$, $\Delta_\theta=0$, and $\Delta_\theta=-0.25\pi$, respectively, which leads to distinct degradation for Eve's localization.

To strengthen the location-privacy enhancement with the proposed DAIS strategy, the delay shift $\Delta_\tau$ can be also adjusted, moving Alice’s pseudo-true location further away from the true location. As shown in Figure \ref{fig:rmse_delay}, for a given $\Delta_\theta=0.25\pi$, using DAIS, we can increase Eve’s localization error to $87.66$ m at $\operatorname{SNR}=0$ dB if $\Delta_\tau=8T_s$. Similar to the adjustment of $\Delta_\theta$, the lower bound on Eve's localization error is not monotonically increasing with respect to $\Delta_\theta$ due to phase wrapping. We observe that the largest value of $\Delta_\tau$ in Figure \ref{fig:rmse_delay}, \textit{i.e.,} $\Delta_\tau=15T_s$, does not result in the worst localization accuracy for Eve, where a NLOS path is incorrectly treated as the LOS path for localization, yet there is more than a $15$ dB gap as compared with Bob who knows the shared information when $\operatorname{SNR}$ is higher than $-10$ dB\footnote{For low SNRs, the effect of the geometric mismatch is relatively less significant because of the noise. The exact performance is also affected by other factors, \textit{e.g.,} AODs $\bar{\bm\theta}_{\mathrm{Tx}}$, according to the analysis in \cite{li2023fpi}.}.

\begin{figure}[t]
\centering
\includegraphics[scale=0.52]{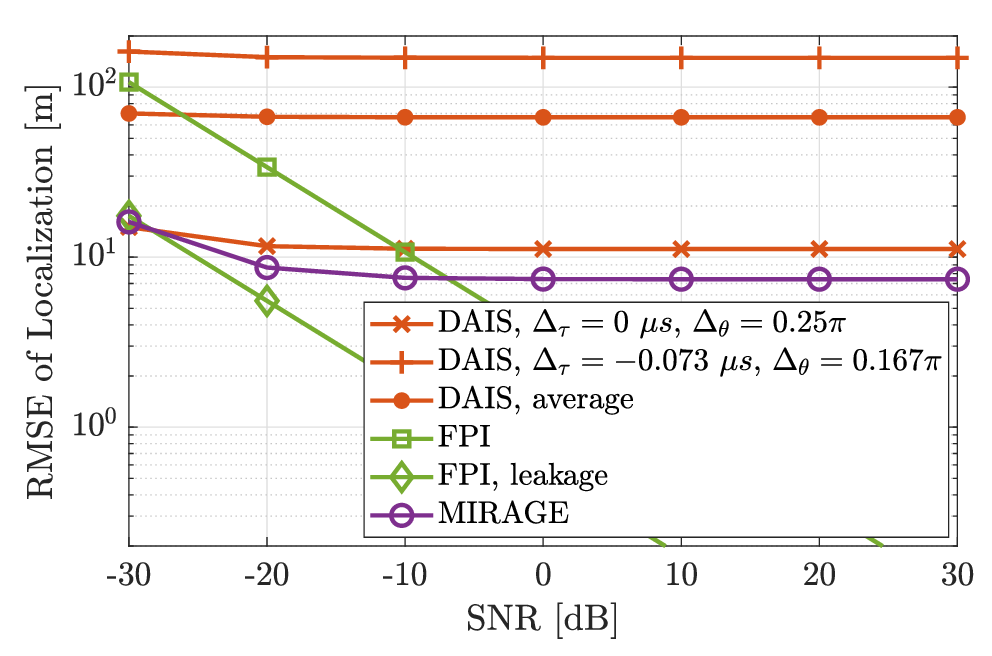}\vspace{-10pt}
\caption{Comparisons of lower bounds for Eve's localization accuracy with different location-privacy enhancement strategies.}
\label{fig:comparisons}\vspace{-15pt}
\end{figure}

Furthermore, the lower bounds for Eve's localization error with the CSI-free FPI derived in \cite{li2023fpi} are shown in Figure \ref{fig:comparisons} as a comparison, where the design parameters $\bar\delta_\tau$ and $\bar\delta_{\theta_{\text{TX}}}$ used for the FPI scheme are set according to \cite{li2023fpi}. Following the analyses in Sections \ref{subsec:parameterdesign} and \ref{subsec:validation}, with a good choice of design parameter for DAIS, \textit{e.g.,} $\Delta\tau=-0.073$ $\mu s, \Delta_\theta=0.167\pi$ in contrast to $\Delta\tau=0$ $\mu s, \Delta_\theta=0.25\pi$, the RMSE exceeding $150$ m for Eve’s localization can be seen in Figure \ref{fig:comparisons}, which outperforms the FPI strategy in the considered SNR regime. To show the efficacy of the proposed DAIS strategy in the absence of the CSI, we also evaluate the Eve's localization accuracy in an average sense, where $\Delta_\tau$ and $\Delta_\theta$ are randomly and uniformly generated over $\left[-\frac{NT_s}{2},\frac{NT_s}{2}\right]$ and $\left[-\frac{\pi}{2},\frac{\pi}{2}\right]$, respectively, and the corresponding lower bounds are averaged over $1000$ realizations of $\boldsymbol{\Delta}$. As compared with the FPI scheme \cite{li2023fpi}, the proposed DAIS design results in a comparable accuracy degradation for Eve at low SNRs while it is more effective at high SNRs due to the introduced geometric mismatch. In addition, \rev{as shown in Figure \ref{fig:comparisons}, if the structure of the precoder is leaked to Eve, the efficacy of the FPI scheme \cite{li2023fpi} will be degraded; in contrast, the proposed DAIS strategy is less sensitive to such a leakage based on the analysis in Section \ref{subsec:strucleakage}.}

\begin{figure}[t]
\centering
\includegraphics[scale=0.52]{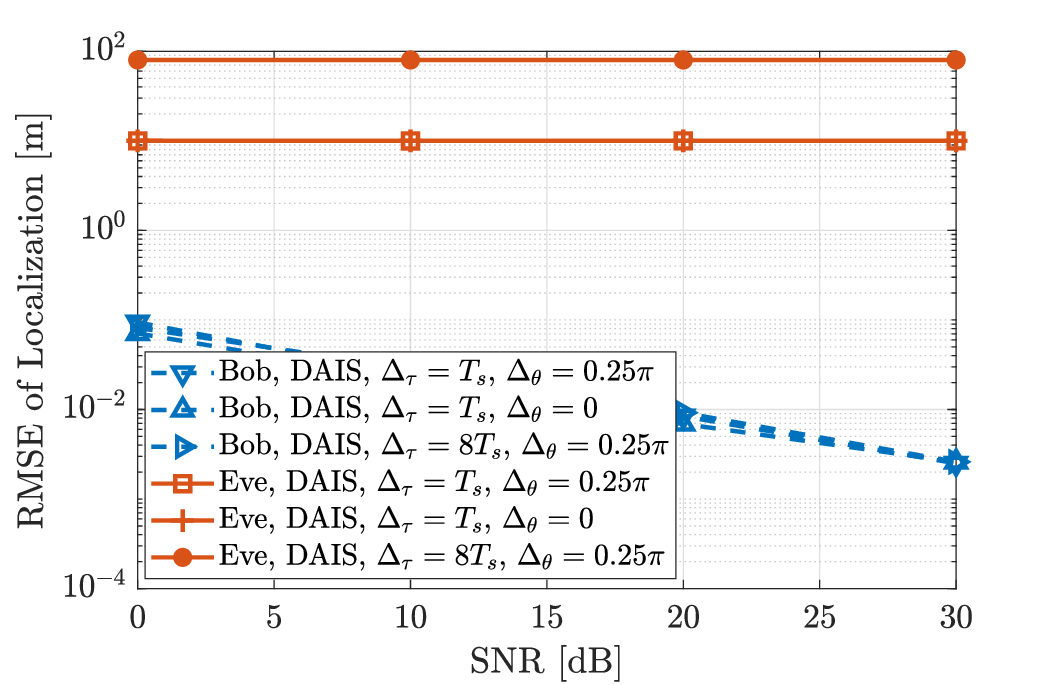}\vspace{-10pt}
\caption{\rev{Comparisons of RMSE of localization in the presence of unknown orientation angle: multi-antenna Bob versus multi-antenna Eve.}}\label{fig:DAIS-MIMO}\vspace{-15pt}
\end{figure}

We note that, from the aspect of obfuscating the LOS path, the proposed DAIS scheme can be considered as an extension of \cite{Ayyalasomayajula} since DAIS can also results in one NLOS path incorrectly treated as the LOS path for localization when $k_{\min}\neq0$. However, the design in \cite{Ayyalasomayajula} relies on the CSI to add an extra delay to the LOS path via a beamforming design such that one of NLOS paths has the smallest TOA. Denote by $\tau_{\text{NLOS},\min}\triangleq\min_{k=1,\cdots,K} \tau_k$ the smallest TOA of all the NLOS paths, and $\tau_{\text{obf}}$ as the extra delay added to the LOS path. Based on perfect CSI, the TOA of the LOS path is shifted by $\tau_{\text{obf}}$ via a beamforming design in \cite{Ayyalasomayajula} and the corresponding MCRB for Eve's localization error is also provided in Figure \ref{fig:comparisons}, which can be similarly derived according to Section \ref{subsec:eveerror}. Herein, we set $\tau_{\text{obf}}$ to $0.0279$ $\mu s$ so $\tau_0+\tau_{\text{obf}} > \tau_{\text{NLOS},\min}$ holds for the system configuration given in Section \ref{subsec:signalparams}. From Figure \ref{fig:comparisons}, we can still observe a stronger degradation for Eve's localization accuracy with the proposed DAIS scheme, even though perfect CSI is leveraged in the design of MIRAGE \cite{Ayyalasomayajula}.

\rev{For notational clarity, our DAIS design is derived for known orientation angle for Alice and single antenna for Bob and Eve; however, the design works without these assumptions. To show the robustness of our proposed strategy, we consider multiple antennas for both Bob and Eve as in \cite{LOCMAN}, where Alice's orientation is unknown. We assume Alice uses the same precoder proposed in Section \uppercase\expandafter{\romannumeral3}-B for DAIS, while both Bob and Eve leverage the super-resolution algorithm in \cite{LOCMAN} to infer Alice's location. The method of \cite{LOCMAN} was shown to nearly achieve the CRB. As observed in Figure \ref{fig:DAIS-MIMO}, the multi-antenna Eve still cannot infer Alice's location accurately in contrast to the multi-antenna Bob who has the shared delay-angle shifts.}

\vspace{-9pt}
\section{Conclusions}\label{sec:con}
In this paper, we provided a novel DAIS strategy which strongly enhanced location privacy against the eavesdropper. The DAIS strategy in turns, spoofed the location of the transmitting UE. Our new strategy did not require CSI between any of the transmitters or receivers.}
{Our DAIS method was implemented via a precoder}, which was designed based on the intrinsic structure of the wireless channel.  The legitimate localizer was still able to infer UE's location via the secure sharing of only two parameters.
A lower bound on eavesdropper's localization error was derived in the presence of DAIS, theoretically validating that  the eavesdropper can be misled to estimate an incorrect location as the true one. Assuming the orthogonality among the paths, the lower bound was simplified  to yield an explicit function of the delay-angle shifts, showing the sensitivity to the various shift parameters and suggesting the appropriate values of these design parameters. Furthermore, the robustness of DAIS against  precoder structure leakage as well as the presence of a multi-antenna eavesdropper was analyzed. Using the proposed DAIS strategy for location privacy enhancement, there was more than $15$ dB accuracy degradation for the eavesdropper in contrast to the legitimate localizer's estimation error at high SNRs, and the RMSE for the eavesdropper's localization can be up to 150 m.  Our proposed strategy was numerically shown to offer  strong performance gains over state of the art strategies: our own CSI-free FPI strategy \cite{li2023fpi} and the CSI-dependent MIRAGE design \cite{Ayyalasomayajula}.

\appendices
\begingroup
\allowdisplaybreaks

\vspace{-8pt} 
\section{Proof of Lemma \ref{lemma:pseudotruepos}}\label{sec:proofpseudotruepos}\vspace{-3pt}
Since $\bar{\bm\phi}^{\star}$ is the parameter vector that minimizes the KL divergence for two given Gaussian distributions $g_{\text{T}}(\hat{\bm\eta}_{\text{Eve}}|\bm\phi^\star)$ and $g_{\text{M}}(\hat{\bm\eta}_{\text{Eve}}|\bar{\bm\phi})$ according to Equation \eqref{eq:KLD}, if there exists a unique vector $\bar{\bm\phi}^\prime\in\mathbb{R}^{2(K+1)\times1}$ such that $o(\bar{\bm\phi}^\prime) = u(\bm\phi^\star)$, we have $\bar{\bm\phi}^\star=\bar{\bm\phi}^\prime$ according to the non-negative property of KL divergence\footnote{An alternative proof of this statement can be found in \cite[Equations (13) and (14)]{ZhengmismatchRIS}, in terms of the KL divergence for two Gaussian distributions.}. Then, our goal amounts to deriving such a $\bar{\bm\phi}^\prime$. Since the path with the smallest TOA is typically treated as the LOS path for localization \cite{Ayyalasomayajula,FascistaMISOML}, considering the effect of the potential phase wrapping caused by the proposed DAIS scheme, we discuss the following two cases.

\begin{enumerate}
    \item[C1)] $\bar{\tau}_0=\min\{\bar{\tau}_0, \bar{\tau}_1,\cdots \bar{\tau}_K\}$ holds. In this case, the LOS path can be correctly distinguished if error in the estimated TOAs is small enough. By solving $o(\bar{\bm\phi}) = u(\bm\phi^{\star})$ for $\bar{\bm\phi}^{\star}$, we have the unique solution as \vspace{-8.5pt}

        \begin{equation}\label{eq:ptruecase1}
            \begin{aligned}
                \bar{\boldsymbol{p}}^{\star} &= \boldsymbol{z} - c{\bar\tau}_0[\cos({\bar\theta}_{\mathrm{Tx},0}),\sin({\bar\theta}_{\mathrm{Tx},0})]^{\mathrm{T}}\\
                \bar{\boldsymbol{v}}^\star_{k} &=\frac{1}{2}\bar{b}^{\star}_{k}[\cos({\bar{\theta}}_{\mathrm{Tx},k}),\sin({\bar{\theta}}_{\mathrm{Tx},k})]^{\mathrm{T}}+\bar{\boldsymbol{p}}^{\star}\vspace{-8pt}
            \end{aligned}
        \end{equation}
        \begin{equation}
           \begin{aligned}    
           &      \mbox{where} \; \;\bar{b}^{\star}_{k}\\
           &=\frac{\left(c{\bar\tau}_k\right)^2-\left(z_x-\bar{p}^{\star}_{x}\right)^2-\left(z_y-\bar{p}^{\star}_{y}\right)^2}{c{\bar\tau}_k-\left(z_x-\bar{p}^{\star}_{x}\right)\cos({\bar{\theta}}_{\mathrm{Tx},k})-\left(z_y-\bar{p}^{\star}_{y}\right)\sin({\bar{\theta}}_{\mathrm{Tx},k})}, \vspace{-3pt}
           \end{aligned}
        \end{equation}
        with $k=1,2,\cdots,K$.
 
    \item[C2)] $\bar{\tau}_0\neq\min\{\bar{\tau}_0, \bar{\tau}_1,\cdots \bar{\tau}_K\}$ holds. In this case, the LOS path cannot be correctly distinguished even with the perfect knowledge of the shifted TOAs. Without loss of generality, we assume that $\bar{\tau}_1=\min\{\bar{\tau}_0, \bar{\tau}_1,\cdots \bar{\tau}_K\}$. By solving $o(\bar{\bm\phi}) = u(\bm\phi^{\star})$ for $\bar{\bm\phi}^{\star}$, the pseudo-true locations of Alice and the first scatterer are given by \vspace{-4.5pt}
        \begin{equation}\label{eq:ptruecase2}
            \begin{aligned}
                \bar{\boldsymbol{p}}^{\star} &= \boldsymbol{z} - c{\bar\tau}_1[\cos({\bar\theta}_{\mathrm{Tx},1}),\sin({\bar\theta}_{\mathrm{Tx},1})]^{\mathrm{T}}\\
                \bar{\boldsymbol{v}}^\star_{1} &=\frac{1}{2}\bar{b}^{\star}_{0}[\cos({\bar{\theta}}_{\mathrm{Tx},0}),\sin({\bar{\theta}}_{\mathrm{Tx},0})]^{\mathrm{T}}+\bar{\boldsymbol{p}}^{\star}\vspace{-5pt}
            \end{aligned}
        \end{equation}
    The pseudo-true locations of the other scatterers are the same as those derived in Case C1. 
\end{enumerate}
Combining the above two cases leads to the desired statement.

\vspace{-7pt} 
\section{Proof of Proposition \ref{prop:degradation}}\label{sec:proofdegradation}
\vspace{-3pt} 
For the true and mismatched distributions $g_{\text{T}}(\hat{\bm\eta}_{\text{Eve}}|\bm\phi^\star)$ and $g_{\text{M}}(\hat{\bm\eta}_{\text{Eve}}|\bar{\bm\phi})$ given by Equations \eqref{eq:truemodel} and \eqref{eq:mismatchmodel}, it can be verified that $\bm\Psi_{\bar{\bm \phi}^\star}^{(\romannumeral 1)}$ and $\bm\Psi_{\bar{\bm \phi}^\star}^{(\romannumeral 2)}$ are positive semidefinite matrices according to Equations \eqref{eq:EveLB}, \eqref{eq:A} and \eqref{eq:B}, while $\operatorname{Tr}\left(\bm\Psi^{(\romannumeral 2)}_{\bar{\bm \phi}^\star}\right)$ is not relevant {to} $\sigma$ from Equations \eqref{eq:ptruecase1} and \eqref{eq:ptruecase2}. Hence, $\operatorname{Tr}\left(\bm\Psi^{(\romannumeral 1)}_{\bar{\bm \phi}^\star}\right)\geq0$ and $\operatorname{Tr}\left(\bm\Psi^{(\romannumeral 2)}_{\bar{\bm \phi}^\star}\right)\geq0$ hold and the goal amounts to proving that there is a constant $\sigma_0$ such that $
    \operatorname{Tr}\left(\bm\Psi^{(\romannumeral 2)}_{\bar{\bm \phi}^\star}\right)>\operatorname{Tr}(\bm\Xi_{\bm \phi^{\star}}) $
for any $0\leq\sigma\leq\sigma_0$. Then, from Equations \eqref{eq:FIMce}, \eqref{eq:efim}, and \eqref{eq:FIMloc}, we have $
\lim_{\sigma\downarrow0}\operatorname{Tr}(\bm\Xi_{\bm \phi^{\star}})=0,$
yielding the desired statement.

\vspace{-7pt} 
\section{Proof of Corollary \ref{coro:EveLBsimplifed}}\label{sec:proofEveLBsimplifed}
\vspace{-3pt} 
When the vector of the pseudo-true locations of Alice and scatterers $\bar{\bm\phi}^\star$ is set according to Equation \eqref{eq:ptrue}, it can be verified that $o(\bar{\bm\phi}^\star) = u(\bm\phi^{\star})$, which indicates that $g_{\text{T}}(\hat{\bm\eta}_{\text{Eve}}|\bm\phi^\star)=g_{\text{M}}(\hat{\bm\eta}_{\text{Eve}}|\bar{\bm\phi})$ for $\bar{\bm\phi}=\bar{\bm\phi}^\star$ according to Equations \eqref{eq:truemodel} and \eqref{eq:mismatchmodel}. Hence, $\bm B_{\bar{\bm \phi}^\star}[r,l] = - \bm A_{\bar{\bm \phi}^\star}[r,l]$ holds due to \vspace{-5pt}
\begin{equation}
\begin{aligned}
    &\bm A_{\bar{\bm \phi}^\star}[r,l]=\mathbb{E}_{g_{\text{T}}(\hat{\bm\eta}_{\text{Eve}}|\bm\phi^\star)}\left\{\frac{\partial^2}{\partial\bar{\bm\phi}^\star[r]\partial\bar{\bm\phi^\star}[l]}\log g_{\text{M}}(\hat{\bm\eta}_{\text{Eve}}|\bar{\bm\phi}^\star)\right\}\\
    &=\left(\frac{\partial}{\partial\bar{\bm\phi^\star}[l]}\left(\frac{\partial o(\bar{\bm\phi}^\star)}{\partial\bar{\bm\phi}^\star[r]}\right)\right)^{\mathrm{T}}\bm\Sigma_{\bar{\bm\eta}}^{-1}\mathbb{E}_{g_{\text{T}}(\hat{\bm\eta}_{\text{Eve}}|\bm\phi^\star)}\left\{\hat{\bm\eta}_{\text{Eve}}-o(\bar{\bm\phi}^\star)\right\}\\
    & - \left(\frac{\partial o(\bar{\bm\phi}^\star)}{\partial\bar{\bm\phi}^\star[r]}\right)^{\mathrm{T}}\bm\Sigma_{\bar{\bm\eta}}^{-1}\frac{\partial o(\bar{\bm\phi}^\star)}{\partial\bar{\bm\phi}^\star[l]}\overset{\mathrm{(a)}}{=}- \left(\frac{\partial o(\bar{\bm\phi}^\star)}{\partial\bar{\bm\phi}^\star[r]}\right)^{\mathrm{T}}\bm\Sigma_{\bar{\bm\eta}}^{-1}\frac{\partial o(\bar{\bm\phi}^\star)}{\partial\bar{\bm\phi}^\star[l]}
\end{aligned}
\end{equation}
and \vspace{-3pt}
\begin{equation}
\begin{aligned}
    &\bm B_{\bar{\bm \phi}^\star}[r,l]\\
    &=\mathbb{E}_{g_{\text{T}}(\hat{\bm\eta}_{\text{Eve}}|\bm\phi^\star)}\left\{\frac{\partial\log g_{\text{M}}(\hat{\bm\eta}_{\text{Eve}}|\bar{\bm\phi}^\star) }{\partial\bar{\bm\phi}^\star[r]}\frac{\partial\log g_{\text{M}}(\hat{\bm\eta}_{\text{Eve}}|\bar{\bm\phi}^\star) }{\partial\bar{\bm\phi}^\star[l]}\right\},\\
    &= \left(\frac{\partial o(\bar{\bm\phi}^\star)}{\partial\bar{\bm\phi}^\star[r]}\right)^{\mathrm{T}}\bm\Sigma_{\bar{\bm\eta}}^{-1}\mathbb{E}_{g_{\text{T}}(\hat{\bm\eta}_{\text{Eve}}|\bm\phi^\star)}\left\{\left(\hat{\bm\eta}_{\text{Eve}}-o(\bar{\bm\phi}^\star)\right)\right.\\
    &\quad\quad\quad\quad\quad\quad\quad\quad\quad\quad\ \left.\times\left(\hat{\bm\eta}_{\text{Eve}}-o(\bar{\bm\phi}^\star)\right)^{\mathrm{T}}\right\}\bm\Sigma_{\bar{\bm\eta}}^{-1}\frac{\partial o(\bar{\bm\phi}^\star)}{\partial\bar{\bm\phi}^\star[l]}\\
    &\overset{\mathrm{(b)}}{=} \left(\frac{\partial o(\bar{\bm\phi}^\star)}{\partial\bar{\bm\phi}^\star[r]}\right)^{\mathrm{T}}\bm\Sigma_{\bar{\bm\eta}}^{-1}\frac{\partial o(\bar{\bm\phi}^\star)}{\partial\bar{\bm\phi}^\star[l]},\vspace{-3pt}
\end{aligned}
\end{equation}
where $\mathrm{(a)}$ and $\mathrm{(b)}$ follow from $o(\bar{\bm\phi}^\star) = u(\bm\phi^{\star}) = \bar{\bm \eta}$ given $\bm\epsilon\sim\mathcal{N}(\bm0,\bm\Sigma_{\bar{\bm\eta}})$. Then, the lower bound for the MSE of Eve's localization can be simplified into \vspace{-5pt}
\begin{equation}\label{eq:EveLBsimplifiedderivation}
\begin{aligned}
    &\bm\Psi_{\bar{\bm \phi}^\star}={{\bm A_{\bar{\bm \phi}^\star}^{-1}\bm B_{\bar{\bm \phi}^\star}\bm A_{\bar{\bm \phi}^\star}^{-1}}+{(\bar{\bm\phi}^\star-\bm\phi^{\star})(\bar{\bm\phi}^\star-\bm\phi^{\star})^\mathrm{T}}}\\
    &={{\left( \left(\frac{\partial o(\bar{\bm\phi}^\star)}{\partial\bar{\bm\phi}^\star}\right)^{\mathrm{T}}\bm\Sigma_{\bar{\bm\eta}}^{-1}\frac{\partial o(\bar{\bm\phi}^\star)}{\partial\bar{\bm\phi}^\star}\right)^{-1}}+{(\bar{\bm\phi}^\star-\bm\phi^{\star})(\bar{\bm\phi}^\star-\bm\phi^{\star})^\mathrm{T}}}\\
    &={{\left( \left(\frac{\partial \bar{\bm \eta}}{\partial\bar{\bm\phi}^\star}\right)^{\mathrm{T}}\bm\Sigma_{\bar{\bm\eta}}^{-1}\frac{\partial \bar{\bm \eta}}{\partial\bar{\bm\phi}^\star}\right)^{-1}}+{(\bar{\bm\phi}^\star-\bm\phi^{\star})(\bar{\bm\phi}^\star-\bm\phi^{\star})^\mathrm{T}}}
      \end{aligned}
    \end{equation}
    \begin{equation}
        \begin{aligned}
    &={{\left(\bm\Pi_{\bar{\bm \phi}^{\star}}^\mathrm{T}\bm J_{\bar{\bm \eta}}\bm\Pi_{\bar{\bm \phi}^{\star}}\right)^{-1}}+{(\bar{\bm\phi}^\star-\bm\phi^{\star})(\bar{\bm\phi}^\star-\bm\phi^{\star})^\mathrm{T}}}\\
    &={{\bm J_{\bar{\bm \phi}^{\star}}^{-1}}+{(\bar{\bm\phi}^\star-\bm\phi^{\star})(\bar{\bm\phi}^\star-\bm\phi^{\star})^\mathrm{T}}},\vspace{-3pt}
\end{aligned}
\end{equation}
where $\bm\Pi_{\bar{\bm \phi}^{\star}}\triangleq\frac{\partial \bar{\bm \eta}}{\partial\bar{\bm\phi}^\star}$, concluding the proof.

\vspace{-10pt}
\section{Proof of Proposition \ref{prop:lowerboundpos}}\label{sec:prooflowerboundpos}
\vspace{-3pt} 
With slight abuse of notation, we shall define the vector of the location-relevant channel parameters as $\bar{\bm\eta}\triangleq[\bar{\bm\eta}^{\mathrm{T}}_{k_{\min}
},\bar{\bm\eta}_1^{\mathrm{T}},\cdots,\bar{\bm\eta}_0^{\mathrm{T}},\cdots,\bar{\bm\eta}_K^{\mathrm{T}}]^{\mathrm{T}}\in\mathbb{R}^{2(K+1)\times1}$ with $\bar{\bm\eta}_k\triangleq[\bar{\tau}_k,\bar{\theta}_{\mathrm{Tx},k}]^{\mathrm{T}}\in\mathbb{R}^{2\times1}$ for $k=0,1,\cdots,K$. Given the orthogonality of the paths in Equation \eqref{eq:orthopath}, it can be verified that the effective FIM for the estimation of $\bar{\bm\eta}$ is a block diagonal matrix, 
\begin{equation}\label{eq:blockJ}
    \bm J_{\bar{\bm \eta}}= \begin{bmatrix}\bm J_{\bar{\bm \eta}_{k_{\min}}} & \bm 0 & \cdots & \bm 0\\
    \bm 0 & \bm J_{\bar{\bm \eta}_1} & \cdots & \bm 0\\
    \vdots & \vdots & \ddots & \vdots\\
    \bm 0 & \bm 0 & \cdots & \bm J_{\bar{\bm \eta}_K}\end{bmatrix}.\vspace{-3pt}
\end{equation}
In addition, we can express the inverse of the transformation matrix $\bm\Pi_{\bar{\bm \phi}^{\star}}$ as\vspace{-5pt}
\begin{equation}\label{eq:inversePi}
\begin{aligned}
    &\bm\Pi_{\bar{\bm \phi}^{\star}}^{-1}\overset{\mathrm{(c)}}{=}\left(\frac{\partial\bar{\bm\phi}^\star}{\partial \bar{\bm\eta}}\right)^{\mathrm{T}}=\begin{bmatrix}
    \bm T_{0,0} & \bm 0 & \cdots & \bm 0\\
    \bm T_{0,1}       & \bm T_{1,1} & \cdots & \bm 0\\
    \vdots      & \vdots      & \ddots & \vdots\\
    \bm T_{0,K}       & \bm 0       & \cdots & \bm T_{K,K}
    \end{bmatrix},\vspace{-5pt}
\end{aligned} 
\end{equation}
where $\mathrm{(c)}$ holds due to the multivariate inverse function theorem \cite{FascistaMISOML} and the matrix $\bm T_{k,k^\prime}$ is given by \vspace{-5pt}
\begin{equation}\label{eq:subT}
    \bm T_{k,k^\prime} \triangleq \left(\frac{\partial\bar{\bm v}_k^\star}{\partial \bar{\bm\eta}_{k^\prime}}\right)^{\mathrm{T}} = \begin{bmatrix} \frac{\partial \bar{v}_{x,k}^\star}{\partial \bar{\tau}_{k^\prime}} & \frac{\partial \bar{v}_{y,k}^\star}{\partial \bar{\tau}_{k^\prime}} \\ \frac{\partial \bar{v}_{x,k}^\star}{\partial \bar{\theta}_{\mathrm{Tx},k^\prime}} & \frac{\partial \bar{v}_{y,k}^\star}{\partial \bar{\theta}_{\mathrm{Tx},k^\prime}}\end{bmatrix},\vspace{-5pt}
\end{equation}
for $k,k^{\prime}=0,1,\cdots,K$, with $\bar{\bm v}^\star_0\triangleq\bar{\bm p}^\star$. Then, ${\bm J_{\bar{\bm \phi}^{\star}}^{-1}}$ can be derived as follows, \vspace{-5pt}
\begin{equation}
\begin{aligned}
    {\bm J_{\bar{\bm \phi}^{\star}}^{-1}}= {\left(\bm\Pi_{\bar{\bm \phi}^{\star}}^\mathrm{T}\bm J_{\bar{\bm \eta}}\bm\Pi_{\bar{\bm \phi}^{\star}}\right)^{-1}}= \bm\Pi_{\bar{\bm \phi}^{\star}}^{-1}\bm J^{-1}_{\bar{\bm \eta}}\left(\bm\Pi^{-1}_{\bar{\bm \phi}^{\star}}\right)^\mathrm{T}.\vspace{-5pt}
\end{aligned} 
\end{equation}
Following from Corollary \ref{coro:EveLBsimplifed} as well as Equations \eqref{eq:blockJ} and \eqref{eq:inversePi}, it can be verified that  \vspace{-5pt}
\begin{equation}\label{eq:EveLBtbsimplify}
\begin{aligned}
    \mathbb{E}\left\{\left\|\hat{\bm p}_\text{Eve}-{\bm p}^{\star}\right\|^2_2\right\} \geq \bm T_{0,0}\bm J^{-1}_{\bar{\bm \eta}_{k_{\min}}}\bm T^{\mathrm{T}}_{0,0}+\left\|\bar{\bm p}^\star-{\bm p}^{\star}\right\|^2_2.\vspace{-5pt}
\end{aligned}
\end{equation}
According to the definition for the effective FIM given in Equations \eqref{eq:efim} and that for the transformation matrix given in \eqref{eq:subT}, substituting Equation \eqref{eq:ptrue} into Equation \eqref{eq:EveLBtbsimplify} and leveraging the asymptotic property of the FIM shown in \cite[Lemma 1]{li2023fpi} yield the desired statement.

\vspace{-10pt}
\section{Proof of Proposition \ref{prop:strucleakge}}\label{sec:proofstrucleakge}
\vspace{-3pt} 
Considering the knowledge of the {structure of the precoder}, we denote by ${\varpi}^{(g,n)} \triangleq\boldsymbol{h}^{(n)}{{\bm \Phi}^{(n)}}{\boldsymbol{s}}^{(g,n)}$ the noise-free observation for the estimation of $\bm\chi$, with $g=1,2,\cdots,G$ and $n=0,1,\cdots,N-1$. It can be verified that\vspace{-3pt}
\begin{equation}
\begin{aligned}
\frac{\partial  \varpi^{(g,n)}}{\partial \Delta_\tau}
&= \sum_{k=0}^{K} \frac{\partial \varpi^{(g,n)}}{\partial {\tau}^\star_k} \\
&={-\frac{j 2\pi n}{N T_{s}}}\sum_{k=0}^{K}\gamma^{\star}_k e^{-\frac{j 2\pi n\bar{\tau}_k}{N T_{s}}}\boldsymbol{ \alpha}\left(\bar{\theta}_{\mathrm{Tx},k}\right)^{\mathrm{H}}{\bm s}^{(g,n)},\vspace{-3pt}
\end{aligned}
\end{equation}
and \vspace{-3pt}
\begin{equation}
\begin{aligned}
\frac{\partial  \varpi^{(g,n)}}{\partial \Delta_{\theta}}&= \sum_{k=0}^{K} \frac{\cos(\Delta_{{\theta}_{\text{Tx}}})}{\cos({\theta}^\star_{\mathrm{Tx},k})}\frac{\partial \varpi^{(g,n)}}{\partial {\theta}^\star_{\mathrm{Tx},k}}\\
& = \frac{j2\pi d}{\lambda_c}\sum_{k=0}^{K}{\gamma}_k e^{-j\frac{2\pi n \bar{\tau}_k}{NT_s}}\cos(\Delta_{{\theta}_{\text{Tx}}}){\bm\alpha({\theta}^\star_{\text{Tx},k})^{\mathrm{H}}}\\
&\ \ \times \operatorname{diag}([0,1,\cdots,N_t-1])\operatorname{diag}\left(\bm \alpha\left({\Delta}_{\theta_\text{Tx}}\right)^{\mathrm{H}} \right)\bm s^{(g,n)}\vspace{-3pt}
\end{aligned}
\end{equation}
hold so there are two rows of $\bm J_{{\bm \chi}}$ {that} are linearly dependent on the others, which concludes the proof. 

\vspace{-10pt} 
\renewcommand*{\bibfont}{\footnotesize}
\printbibliography
\end{document}